\newtheorem{theorem}{Theorem}
\newtheorem{lemma}[theorem]{Lemma}
\newtheorem{corollary}[theorem]{Corollary}
\theoremstyle{definition}
\newtheorem{remark}{Remark}
\newtheorem*{example}{Example}
\title{Equivalence of Models of Cake-Cutting Protocols\footnote{This paper is based on the second author's final year project.}}
\author[1]{Paul W. Goldberg\thanks{Paul.Goldberg@cs.ox.ac.uk} \and Ioana Iaru\thanks{iaruioana@gmail.com} \\Department of Computer Science, Oxford University} 
\begin{document}

\maketitle

\begin{abstract}
The cake-cutting problem involves dividing a heterogeneous, divisible resource fairly between $n$ agents.
\citet{BranzeiCKP16}~introduced {\em generalised cut and choose} (GCC) protocols, a formal model for representing
cake-cutting protocols as trees with ``cut'' and ``choose'' nodes corresponding to the agents' actions,
and if-else statements.

In this paper, we identify an alternative and simpler extensive-form game model for cake-cutting protocols,
that we call {\em branch choice} (BC) protocols.
We show that the class of protocols we can represent using this model is invariant
under certain modifications to its definition.
We further prove that any such protocol can be converted to a restricted form in which the agents first cut the cake
and then get to choose between various branches leading to different allocations.
Finally, we show that this model has the same expressive power as GCC protocols,
i.e. they represent the same class of protocols up to a notion of equivalence involving the bounds on
envy that each agent can guarantee for themselves.
For this purpose, we introduce a new notion of envy-equivalence of protocols.
\end{abstract}

\section{Introduction}
The cake-cutting problem is a fair division problem which involves dividing a heterogeneous, divisible resource (the metaphorical ``cake'') between several agents, who have different preferences over different parts of the cake. In practice, this framework can be used to divide various types of resources, such as land or time. The modern study of this problem dates back to \citet{steinhaus} and a description of its history can be found in the books by \citet{brams-taylor} or \citet{robertson-webb}.

To be able to tackle this problem, we need to establish what it means for a division to be fair. There are two common criteria for fairness, \emph{proportionality}, in which each of the $n$ agents receives a piece that they consider to be worth at least $\frac{1}{n}$ of the entire cake, and \emph{envy-freeness}, which means no agent prefers the piece another agent received over their own. If the entire cake is allocated, envy-freeness implies proportionality.

In the literature, well-known cake-cutting protocols are often described in plain language, but to be able to reason about protocols more generally or implement them on a computer, we require some formal model of computation. There are several such models, the most well-known ones being the Robertson-Webb model \citep{robertson-webb}, which involves two types of queries: \emph{Eval} queries, which ask an agent how much they value a given piece of cake, and \emph{Cut} queries, in which an agent specifies where a piece of cake should be cut to be worth a certain value, and the moving-knives model (see, for example, \citet{dubins-spanier}), in which a referee continuously moves one or more knives across the cake until one of the agents calls ``stop''.

More recently, \citet{branzei,BranzeiCKP16} introduced a new model for cake-cutting protocols, \emph{generalised cut and choose} (GCC) protocols, in which rather than reporting their preferences to a referee, the agents divide the cake amongst themselves. In the GCC model, a protocol is represented as a tree with \emph{cut} nodes, in which an agent makes a cut inside one of a set of pre-existing pieces of cake, \emph{choose} nodes, in which an agent chooses from a set of existing pieces and is allocated that piece, and \emph{if-else statements} depending on the execution history of the protocol. 

In this paper, we propose an alternative model for representing cake-cutting protocols as trees, \emph{branch choice} (BC) protocols: at \emph{cut} nodes, an agent makes a cut inside a given piece of cake, similarly to GCC protocols, but at \emph{choose} nodes, an agent instead gets to choose which branch of the tree to proceed to. The resulting pieces are allocated to various agents at leaf nodes. Having such a simple, restrictive model is an advantage when it comes to automation, since making the different possible choices and their outcomes more explicit would make it easier for AI agents to participate in a protocol. 

We define various notions of equivalence between cake-cutting protocols, based on the bounds on value/envy that an agent can guarantee for themselves in the two protocols. We show that making various tweaks to the definition of BC protocols (such as considering protocols that can be represented as directed acyclic graphs rather than trees, or allowing agents to make a cut within a sequence of pieces rather than a single piece), results in the same class of protocols, up to what we call \emph{strong envy-equivalence}. (Informally, an agent can guarantee the same bounds on their envy against the other agents in both protocols.)

We then prove that any BC protocol can be put into a special form in which all the cut nodes come before the choose nodes, making it easier to reason about the structure of the protocol. Finally, we show that in fact BC protocols have the same expressive power as GCC protocols, and we look into how some classic cake-cutting protocols can be represented as BC protocols. 

\subsection{Related Work}
The Robertson-Webb query model, first introduced by \citet{robertson-webb} and formalised by \citet{woeginger}, which is the standard model for discrete cake-cutting protocols in the literature, involves a referee who can ask the agents two types of queries related to their valuation functions (see Introduction), and divides the cake accordingly. A natural question is whether there is an incentive for the agents to report their valuation functions truthfully, i.e. whether there exist any strategy-proof protocols in the Robertson-Webb model. Here, strategy-proofness means truthful reporting is a dominant strategy for every agent, regardless of their valuation function.

\citet{dictatorship} show this is impossible for any ``interesting'' protocols in the Robertson-Webb model: any strategy-proof protocol for two agents is dictatorial (there is a fixed agent who always gets the entire cake), and, more generally, in any strategy-proof protocol for $n$ agents, at least one agent gets an empty piece. There are some other papers that deal with strategy-proofness and the closely related concept of truthfulness. 

Notably, \citet{chen} design a cake-cutting protocol that is truthful, proportional, and envy-free for piecewise-uniform valuation functions, but which uses a direct-revelation mechanism (i.e. the agents reveal their entire valuation function to the referee), rather than the standard query model. A recent paper by \citet{tao} proves that no such protocol can exist for the more general case of strictly positive piecewise-constant valuations, and designs a mechanism that satisfies a weaker notion of truthfulness.

 \citet{branzei,BranzeiCKP16} introduce the GCC protocols model described above, and use it as a framework to design a protocol for which a contiguous allocation is envy-free if and only if it is the outcome of a Nash equilibrium. 
 
The BC protocols model introduced in this paper is different from the GCC model in that instead of having agents choose which pieces of the cake they receive, they choose which branch of the protocol to proceed to. In addition, the agents' actions are more restrictive, so that the ordering of the cuts is encoded in the tree structure, which eliminates the need for if-else statements.

Furthermore, our work does not focus on game-theoretic aspects like strategy-proofness, though this would be an interesting topic to address in the future, but rather on results related to the expressive power of the model, which is a topic that \citet{branzei,BranzeiCKP16} only touch on briefly, and considerations such as the space complexity of implementing common protocols in this format.

For a readable introduction to the better-known cake-cutting protocols, including algorithmic and strategic issues, see \citet{Procaccia13}.

\section{Background}
\subsection{Setting}

The cake, which is represented as the interval $[0,1]$, must be divided between $n$ \emph{agents}.  Each agent must be allocated a \emph{piece of cake} (i.e. a finite union of disjoint intervals). The pieces of cake allocated to different agents must be disjoint. We allow for an individual agent to be given no cake (or an empty piece $[x, x]$), but every piece of cake must be allocated to some agent. 

The agents' preferences are given by \emph{private valuation functions} $V_{1}, V_{2}, ..., V_{n}$ that assign a value $V_i(a, b)$ to every subinterval $[a, b] \subseteq [0, 1]$. We extend the $V_i$ additively to pieces of cake, writing $V_i(X)$ for the value of a piece of cake X. The valuation functions are assumed to have the following properties:
\begin{itemize}
\item \emph{Normalization:} $V_i(0,1) = 1$
\item \emph{Additivity:} For any two disjoint intervals $I_1$ and $I_2$, $V_i(I_1 \cup I_2) = V_i(I_1) + V_i(I_2) $
\item \emph{Divisibility:} For every interval $[a, b] \subseteq [0,1]$ and $0 \leq \lambda \leq 1$ there exists $c \in [a, b]$ such that $V_i(a, c) = \lambda V_i(a, b)$. In particular, this implies the valuation functions are \emph{non-atomic}: $V_i(a, a) = 0$ for every $a \in [0, 1]$.
\item \emph{Non-negativity:} $V_i(a, b) \geq 0$ for any $a, b \in [0,1]$.
\end{itemize}

For an allocation $A = (X_1,...,X_n)$, where $X_i$ is the piece of cake assigned to agent $i$, we define the \emph{envy} of agent $i$ towards agent $j$ as:

 \[envy_A(i, j) = \max(V_i(X_j) - V_i(X_i), 0) \]

\subsection{Generalised Cut and Choose (GCC) Protocols}

A \emph{generalised cut and choose (GCC) protocol} \citep{branzei,BranzeiCKP16}, is represented as a tree in which each node corresponds to the action of an agent. There are three types of nodes:
\begin{itemize}
\item \emph{cut} nodes: an agent makes a cut between two existing cuts.  More specifically, agent $i$ is given a set $S = \{[x_1, y_1], ... , [x_m, y_m]\}$ of intervals representing contiguous piece of cake, such that the endpoints of every piece are either 0,1, or cuts made at a previous step. The agent picks an interval $[x_j, y_j]$ and makes a cut at some point $z$ in it.
\item \emph{choose} nodes: an agent chooses between a set of pieces induced by the existing cuts. Agent $i$ is given a set $S$ as above. They choose an interval $[x_j, y_j]$ from $S$, and this interval then gets allocated to them. In the special case $|S| = 1$, the agent is essentially assigned a piece and there is no actual ``choice''.
\item \emph{if-else} nodes: These nodes can have multiple branches. The protocol progresses to one of the branches based on an if-else statement. The conditions in the if-else statement depend on the order of the cut points made in the previous steps, and the execution history of the protocol (Equivalently, the conditions depend on which pieces the agents cut or chose at each of the previous steps).
\end{itemize}

Note that cut and choose nodes have at most one child each, so the protocol only branches out at if-else nodes. 

In the original paper by \citet{branzei}, the if-else statements are not explicitly included in the tree structure. Instead, the cut/choose nodes have multiple children and progression to one of the children is based on an if-else statement. We view if-else statements as separate nodes for simplicity. It is easy to see that our interpretation and the original one have the same expressive power.

Algorithm \ref{selfridge-conway-gcc} shows how the Selfridge-Conway protocol \citep{robertson-webb}, a classic envy-free protocol for three agents, can be represented as a GCC protocol. Algorithm \ref{selfridge-conway} shows the original protocol, for comparison.

\RestyleAlgo{boxruled}
\LinesNumbered
\begin{algorithm}[ht]
  \caption{The Selfridge-Conway protocol, represented as a GCC protocol. Strictly speaking, wherever the algorithm labels a piece or assumes w.l.o.g. that a certain piece was chosen, we should have an if-else node with multiple branches depending on which piece that is, but we have mostly omitted those for simplicity. \label{selfridge-conway-gcc}}
  CUT: Agent $1$ cuts in $[0,1]$ at point $x_0$.
  
  CUT: Agent $1$ cuts in $[x_0, 1]$ at point $x_1$.
    
  CUT: Agent $2$ cuts a piece from $\{ [0,x_0], [x_0, x_1], [x_1, 1] \}$
  Label this piece $A$, dividing it into pieces $A_1$ and $A_2$. Let the other two pieces be $B$ and $C$. \label{label-piece-a}
  CHOOSE: Agent $3$ chooses a piece from $S = \{A_1, B, C\}$.
  
  \eIf{Agent $3$ chose $A_1$}{
  CHOOSE: Agent $2$ chooses a piece from $S = \{B,C\}$.
  
  CHOOSE: Agent $1$ chooses the last remaining piece in $S$.
  
 CUT: Agent 2 cuts $A_2$ at $\frac{1}{3}$ of the value, resulting in two pieces $A_{21}$, $A_2'$.
	
 CUT: Agent 2 cuts $A_2'$ at $\frac{1}{2}$, resulting in two pieces $A_{22}$, $A_{23}$.
	
 CHOOSE: Agent 3 chooses a piece out of $S' = \{A_{21}, A_{22}, A_{23}\}$, WLOG $A_{21}$.
	
 CHOOSE: Agent $1$ chooses a piece out of $\{A_{22}, A_{23}\}$.
 
 CHOOSE: Agent $2$ chooses the remaining piece.
  }{\tcp {Agent $3$ did not choose $A_1$}
  CHOOSE: Agent $2$ gets piece $A_1$.
  
  CHOOSE: Agent $1$ chooses the last remaining piece in S.
  
 CUT: Agent 3 cuts $A_2$ at $\frac{1}{3}$ of the value, resulting in two pieces $A_{21}$, $A_2'$.
	
 CUT: Agent 3 cuts $A_2'$ at $\frac{1}{2}$, resulting in two pieces $A_{22}$, $A_{23}$.
	
 CHOOSE: Agent 2 chooses a piece out of $S' = \{A_{21}, A_{22}, A_{23}\}$, WLOG $A_{21}$.
	
 CHOOSE: Agent $1$ chooses a piece out of $\{A_{22}, A_{23}\}$.
 
 CHOOSE: Agent 3 chooses the remaining piece.
 }
\end{algorithm}

\RestyleAlgo{boxruled}
\LinesNumbered
\begin{algorithm}[H]
  \caption{The Selfridge-Conway protocol. \label{selfridge-conway}}
  Agent $1$ divides the cake into three pieces of value $\frac{1}{3}.$ 
  
  Let $A$ be the largest piece according to agent $2$. Let the other two pieces be $B$ and $C$.
  
  Agent $2$ trims $A$ so that it is the same size as the second largest piece. Label the trimmed piece $A_1$, and the trimmings $A_2$.
  
  Agent $3$ chooses a piece from $\{A_1, B, C\}$.
    
  \eIf{Agent $3$ chose $A_1$}{
	Agent $2$ chooses one of $B$ and $C$.
  }{\tcp {Agent $3$ did not choose $A_1$}
Agent $2$ gets piece $A_1$.
 }
 Agent $1$ gets the last piece. It remains to divide the trimmings $A_2$.

One of agents $2$ and $3$ got $A_1$. Call that agent $X$, and the other agent $Y$.

 Agent $Y$ cuts $A_2$ into three equal pieces.
 
 Agent $X$ chooses one of the three pieces.
 
 Agent $1$ chooses one of the remaining two pieces.
 
 Agent $Y$ gets the last remaining piece.
\end{algorithm}

\subsection{Equivalence of Cake-Cutting Protocols}
	Since we are working with multiple models for representing cake-cutting protocols, we need to define what it means for two protocols (possibly represented in different models) to be ``the same''. We will present multiple possible notions of equivalence between cake-cutting protocols, which are all based on various bounds agents can guarantee for themselves in terms of value/envy, regardless of the other agents' strategies, so they are independent of the model used to represent a protocol.
	
	 Let $\mathcal{P}$  and $\mathcal{P'}$ be two cake-cutting protocols.
	\begin{enumerate}
	\item \textbf{Value equivalence:} $\mathcal{P}$ and  $\mathcal{P'}$ are equivalent if the minimum value that each agent can guarantee for themselves is the same. That is, if in the worst case for protocol $\mathcal{P}$ agent $i$ gets a piece of value $\lambda$, then in the worst case for protocol $\mathcal{P'}$ the piece agent $i$ gets must have the same value $\lambda$.
	
	\item \textbf{Total envy equivalence:} $\mathcal{P}$ and  $\mathcal{P'}$ are equivalent if the maximum total amount of envy $\sum_{j=1, j \neq i}^{n}envy(i, j)$ an agent can guarantee for themselves is the same in both protocols.
	
	\item \textbf{Pair-wise envy equivalence:} $\mathcal{P}$ and $\mathcal{P'}$ are equivalent if, for any two agents $i$ and $j$, if in one of the protocols agent $i$ can guarantee $envy(i, j) \leq M$, for some $M \in [0,1]$, then they can guarantee the same bound holds in the other protocol.
		
	\item \textbf{Strong envy equivalence:} $\mathcal{P}$ and $\mathcal{P'}$ are equivalent if, for any agent $i$ and any set $S$ of agents not containing $i$, if in one of the protocols agent $i$ can guarantee simultaneous bounds $envy(i, j) \leq M_j, \forall j \in S$, for some $M_j \in [0, 1]$, then they can guarantee the same bounds hold in the other protocol. This is the notion of equivalence we will be using the most throughout the paper. Note this is still quite broad, since, in particular, any two envy-free protocols will be strongly envy-equivalent. A lot of our results involve protocols that appear ``similar'' in a stronger sense (for example, some of our conversion algorithms involve mimicking the actions in a given protocol in a different model), but this is difficult to formalise and since, in practice, most protocols aim to guarantee certain bounds on envy, our definition is good enough.
	
	Note also that strong envy equivalence implies pair-wise envy equivalence (taking $S = \{ j \}$).
	\end{enumerate}

\section{An Alternate Model for Cake-Cutting Protocols}

In Section~\ref{sec:bcdef} we define the class of Branch Choice (BC) protocols.
In Sections~\ref{sec:bcdag} and~\ref{sec:bcext} we consider variations of the definition, and we prove that they are equivalent to the original one. In Section~\ref{sec:bccuts} we show that no expressive power is lost in moving to a restricted version in which the ``cut'' nodes must precede the ``choose'' nodes. In Section~\ref{sec:bcgcc} we show that BC protocols have the same expressive power as GC protocols. In Section~\ref{sec:bcprop} we show that certain proportional protocols can be expressed as BC (equivalently GCC) protocols.

\subsection{Branch Choice (BC) Protocols}\label{sec:bcdef}

Consider the following alternative representation of cake-cutting protocols. As before, we represent the protocols as trees, but this time we divide the nodes into:
\begin{itemize}
\item non-leaf nodes, which are further divided into:
	\begin{itemize}
	\item \emph{cut} nodes: We have a partition of the cake into $m$ contiguous pieces, ordered from left to right.  Agent $i$ subdivides a piece $j$ by making one cut. Trivial cuts (i.e. cutting at one of the ends of the piece) are allowed. For convenience, we consider that for a piece $[x, y]$, cutting at one of the ends, WLOG $x$, divides it into $[x, y]$ and a new piece $[x, x]$. Such ``empty'' pieces will have value $0$ to any agent, but they allow us to keep track of the number of pieces throughout the tree. Note $i$ and $j$ are both fixed, unlike in the previous section. 
	\item \emph{choose} nodes: Given a partition of the cake as above, agent $i$ is allowed to choose which child of the current node to proceed to, based on the position of the cuts.
	\end{itemize}
\item leaf nodes: Given a partition of the cake, each piece in the partition is allocated to an agent. The allocation is represented as a mapping $\{1,..., m\} \rightarrow \{1, ..., n\}$ where $n$ is the number of agents, and $m$ is the number of pieces in the partition. Note that for a given leaf, the number $m$ of pieces the cake has been divided into by the time we reach that leaf is one plus the number of cut nodes above the leaf, which does not depend on the execution history, so the mapping is well-defined.
\end{itemize}

We say a cut/choose node corresponding to an action of agent $i$ is \emph{controlled by agent} $i$.

We will refer to the class of protocols that can be represented in the above form as \emph{branch choice (BC) protocols} to reflect the fact that at choose nodes, an agent chooses which branch of the tree is executed.

Note we need to keep track of the execution history at every node (i.e. how the cake is partitioned before the action corresponding to the current node). We can either store this information at every node (requiring extra space), or recover it from the unique path between the current node and the root (no extra space). Both take linear time in the number of nodes to read, so we will use the latter option and store no extra information at the nodes.

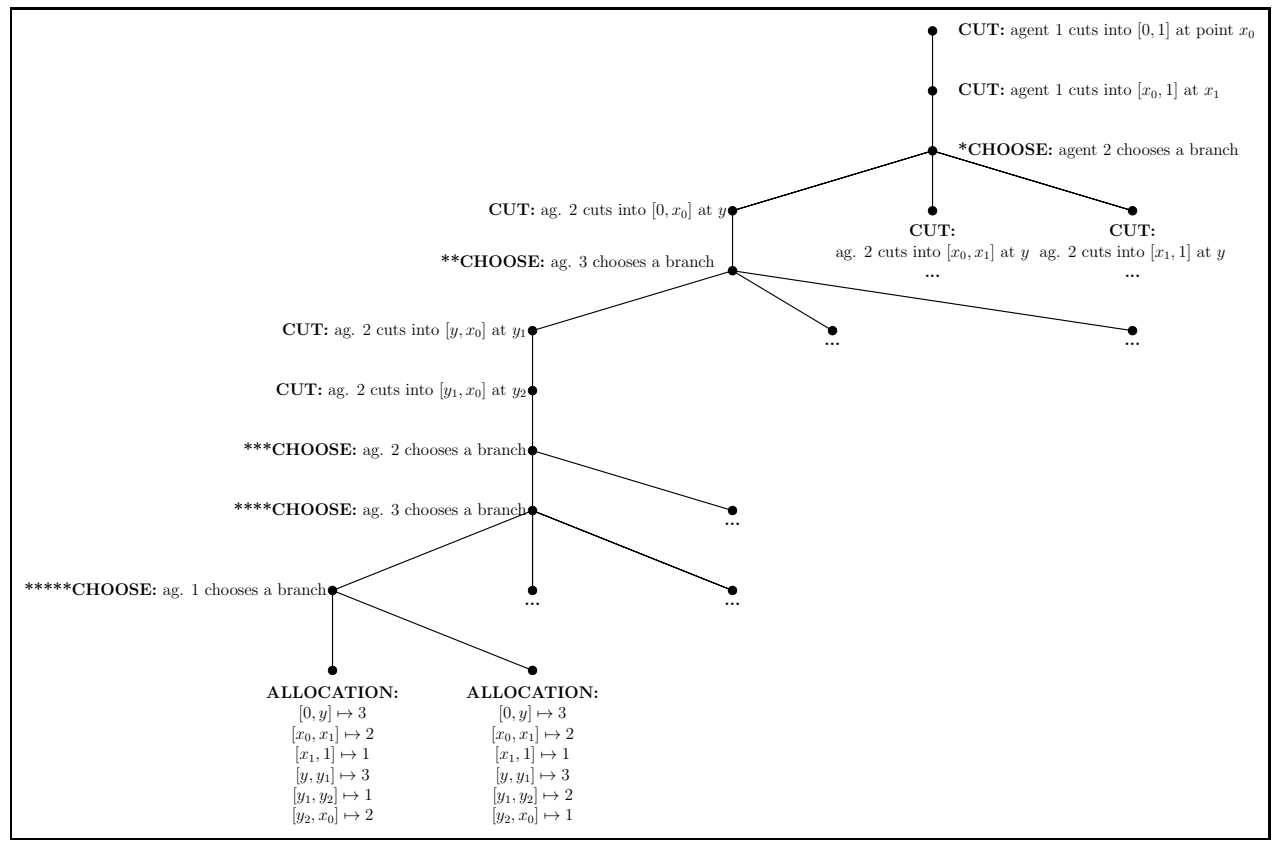
\begin{figure}[H]
\fbox{%
\resizebox{\textwidth}{!}{%
\begin{tikzpicture}

\draw[fill=black] (0,0) circle (3pt);
\draw[fill=black] (0,-1.5) circle (3pt);
\draw[fill=black] (0,-3) circle (3pt);
\draw[fill=black] (-5,-4.5) circle (3pt);
\draw[fill=black] (0,-4.5) circle (3pt);
\draw[fill=black] (5, -4.5) circle(3pt);
\draw[fill=black] (-5,-6) circle (3pt);
\draw[fill=black] (-10,-7.5) circle (3pt);
\draw[fill=black] (-2.5,-7.5) circle (3pt);
\draw[fill=black] (5,-7.5) circle (3pt);
\draw[fill=black] (-10, -9) circle(3pt);
\draw[fill=black] (-10, -10.5) circle(3pt);
\draw[fill=black] (-10, -12) circle(3pt);
\draw[fill=black] (-5, -12) circle(3pt);
\draw[fill=black] (-5, -14) circle(3pt);
\draw[fill=black] (-10, -14) circle(3pt);
\draw[fill=black] (-15, -14) circle(3pt);
\draw[fill=black] (-15, -16) circle(3pt);
\draw[fill=black] (-10, -16) circle(3pt);

\node[anchor=west] at (0.5,0)  {\textbf{CUT:} agent $1$ cuts into $[0, 1]$ at point $x_0$};
\node[anchor=west] at (0.5, -1.5) {\textbf{CUT:} agent $1$ cuts into $[x_0, 1]$ at $x_1$};
\node[anchor=west] at (0.5, -3)  
{\textbf{*CHOOSE:} agent $2$ chooses a branch};

\node[anchor=east] at (-5, -4.5) {\textbf{CUT:} ag. $2$ cuts into $[0, x_0]$ at $y$};

\node[align=center, anchor=south] at (0, -6.3) {\textbf{CUT:} 
\\
ag. $2$ cuts into $[x_0, x_1]$ at $y$
\\
\textbf{...}
};

\node[align=center, anchor=south] at (5, -6.3) {\textbf{CUT:}
\\
ag. $2$ cuts into $[x_1,1]$ at $y$
\\
\textbf{...}
};

\node[anchor=east] at (-5.3, -5.8) {\textbf{**CHOOSE:} ag. $3$ chooses a branch};
\node[anchor=south] at (-2.5, -8) {\textbf{...}};
\node[anchor=south] at (5, -8) {\textbf{...}};

\node[anchor=east] at (-10, -7.5) {\textbf{CUT:} ag. $2$ cuts into $[y, x_0]$ at $y_1$};
\node[anchor=east] at (-10, -9) {\textbf{CUT:} ag. $2$ cuts into $[y_1, x_0]$ at $y_2$};

\node[anchor=east] at (-10, -10.5) {\textbf{***CHOOSE:} ag. $2$ chooses a branch};
\node[anchor=east] at (-10, -12) {\textbf{****CHOOSE:} ag. $3$ chooses a branch};
\node[anchor=south] at (-5, -12.5) {\textbf{...}};

\node[anchor=south] at (-10, -14.5) {\textbf{...}};
\node[anchor=south] at (-5, -14.5) {\textbf{...}};

\node[anchor=east] at (-15, -14) {\textbf{*****CHOOSE:} ag. $1$ chooses a branch};

\node[align=center, anchor=south] at (-15, -20) {\textbf{ALLOCATION:}
\\
$[0,y] \mapsto 3$
\\
$[x_0, x_1] \mapsto 2$
\\
$[x_1, 1] \mapsto 1$
\\
$[y, y_1] \mapsto 3$
\\
$[y_1, y_2] \mapsto 1$
\\
$[y_2, x_0] \mapsto 2$
};

\node[align=center, anchor=south] at (-10, -20) {\textbf{ALLOCATION:}
\\
$[0,y] \mapsto 3$
\\
$[x_0, x_1] \mapsto 2$
\\
$[x_1, 1] \mapsto 1$
\\
$[y, y_1] \mapsto 3$
\\
$[y_1, y_2] \mapsto 2$
\\
$[y_2, x_0] \mapsto 1$
};

\draw[thick] (0,0) -- (0,-1.5) -- (0,-3) -- (-5, -4.5) -- (0, -3) -- (5, -4.5) -- (0, -3) -- (0, -4.5);
\draw[thick] (-5,-4.5) -- (-5, -6);
\draw[thick] (-5,-6) -- (-10, -7.5);
\draw[thick] (-5,-6) -- (-2.5, -7.5);
\draw[thick] (-5,-6) -- (5, -7.5);
\draw[thick] (-10, -12) -- (-10, -10.5) -- (-10,-9) -- (-10, -7.5);
\draw[thick] (-5, -12) -- (-10, -10.5);
\draw[thick] (-10, -12) -- (-10, -14) -- (-10,-12) -- (-5,-14) -- (-10, -12) -- (-15, -14);
\draw[thick] (-10, -16) -- (-15, -14) -- (-15, -16);

\end{tikzpicture}}}
\caption{(Partial) tree representation of a BC protocol that is strongly envy-equivalent to the Selfridge-Conway protocol (i.e. it is envy-free). The omitted subtrees, represented by ``...'', are essentially copies of the one shown, for some permutation of the pieces/agents. The complete tree would have $150$ nodes in total. See detailed explanation below.}
 \label{fig:selfridge-conway-bc}
\end{figure}

In figure \ref{fig:selfridge-conway-bc}:

\textbf{*} - agent $2$ trims the largest piece according to them (as per the original protocol). In our representation of the protocol, they have to choose the largest piece to minimise their envy.

\textbf{**} - agent $3$ chooses which of the three ``big'' pieces $[0,y], [x_0, x_1]$ and $[x_1, 1]$ they get.

\textbf{***} - agent $2$ chooses which of the remaining two ``big'' pieces they get. In the two branches we have left out, where agent $1$ hasn't chosen $[0,y]$, agent $2$ is assigned $[0,y]$ by default and does not get a choice here.

\textbf{****}, \textbf{*****} - agents $3$ and $1$ choose which of the ``trimmings'' they are assigned.

It can be shown that all other conditions in the original protocol (e.g. agent $1$ cutting the piece into thirds at the beginning, agent $2$ trimming the largest piece to be the same size as the second largest, etc.) can be recovered by analysing where the agents need to cut to minimise their envy.

\subsection{BC Protocols Represented as Directed Acyclic Graphs}\label{sec:bcdag}
As a generalisation, we can instead consider the class of cake-cutting protocols that can be represented as \emph{directed acyclic graphs} (DAGs) using the same types of nodes as BC protocols. 

A DAG representing a cake-cutting protocol will have a designated \emph{root} node that corresponds to the first action in the protocol. We will also assume that every node is reachable from the root, as nodes that are not reachable will not be relevant to the protocol. To ensure the DAG represents a valid protocol, we must also require that if there are multiple paths from the root to a node, all of them result in the same number of cuts, so that the number of pre-existing cuts at a given node is well-defined and we can describe actions like ``agent $i$ cuts into the $j$-th piece''. This can provide a more compact representation for certain protocols, but we will see that it is no more ``powerful'' than the tree representation, that is, given a DAG representing some protocol, we can represent the same protocol as a tree.

\begin{theorem}\label{DAGs to trees}
The protocols that can be represented as DAGs are exactly the BC protocols as defined above.
\end{theorem}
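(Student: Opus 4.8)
The plan is to prove the two inclusions separately. The easy direction is immediate: a tree is a special case of a DAG, and any BC tree trivially satisfies the reachability condition and the condition that all root-to-node paths induce the same number of cuts (there is a unique such path). Hence every BC protocol is a DAG-representable protocol. The content of the theorem is the converse, so the bulk of the argument is: given a DAG $G$ with root $r$ representing a protocol, construct a tree $T$ representing the same protocol. The natural tool is the \emph{unfolding} (unravelling) of $G$ from $r$, which duplicates any shared substructure so that each node of $G$ gets a separate copy for each way of reaching it.

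Concretely, I would take the nodes of $T$ to be the directed paths in $G$ that start at $r$. The root of $T$ is the trivial length-$0$ path at $r$. For a path $p$ ending at a node $v$ of $G$, its children in $T$ are exactly the extensions $p\cdot(v,w)$ by an out-edge $(v,w)$ of $G$, and each such child inherits all of $v$'s labelling data: whether $v$ is a cut, choose, or leaf node, which agent controls it, which piece index $j$ is cut (at cut nodes), and the allocation mapping $\{1,\dots,m\}\to\{1,\dots,n\}$ (at leaves). By construction every node of $T$ except the root has a unique parent, namely $p$ with its last edge deleted, so $T$ is a tree, and there is an obvious projection $\pi\colon T\to G$ sending a path to its endpoint, which preserves all node labels.

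Two things must then be checked to see that $T$ is a valid BC protocol. First, finiteness: since $G$ is acyclic, any directed walk from $r$ is automatically a simple path (revisiting a node would create a cycle), so there are only finitely many, and $T$ is a finite tree of depth at most $|V(G)|$. Second, well-definedness of the inherited labels: in a BC tree the piece index $j$ cut at a node and the leaf allocation both refer to the number $m = 1 + (\text{number of cut nodes on the path above})$. For a node $p$ of $T$ ending at $v$, the number of cut nodes strictly above $p$ equals the number of cut-edges along $p$, and by the DAG's uniform-cut-count hypothesis this number depends only on $v$, not on the chosen path. Hence it matches the well-defined cut count at $v$ in $G$, so every inherited piece index and allocation refers to a partition of the correct size and $T$ is a legitimate BC protocol.

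Finally I would argue that $T$ and $G$ represent the same protocol. The clean way is to exhibit a bijection between executions: an execution is a sequence of concrete actions (a cut point chosen inside the designated piece at each cut node, or a child selected at each choose node) starting from $r$ and ending at a leaf. Because $\pi$ preserves node types, controlling agents, the piece to be cut, and the set of children available at choose nodes, any execution tracing a path from $r$ to a leaf in $G$ corresponds to exactly the path-node of $T$ it traces, with identical agent decisions and identical resulting allocation; the execution history (in particular the positions of the cuts, on which choose decisions may depend) is encoded in the path itself and so is available at the corresponding node of $T$. This bijection preserves the final allocation, hence preserves every value/envy bound any agent can guarantee, giving at least strong envy-equivalence (and in fact strategic identity). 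I expect the main obstacle to be precisely this last step: making the execution correspondence rigorous, i.e. verifying that at a choose node the same set of branches, indexed in the same way by the positions of the earlier cuts, is offered in $T$ as in $G$, so that no agent's options or information are altered by the unfolding. Finiteness and label well-definedness are routine once the uniform-cut-count hypothesis is invoked.
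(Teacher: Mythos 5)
Your proposal is correct and follows essentially the same approach as the paper: both directions match, and your path-based unfolding of the DAG is exactly the tree that the paper's iterative procedure (copying each multi-parent node and its descendants once per parent, processed in reverse topological order) produces in the limit. Your write-up is somewhat more careful than the paper's, since you explicitly verify finiteness, the well-definedness of the inherited piece indices via the uniform-cut-count hypothesis, and the execution bijection, all of which the paper leaves implicit.
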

\begin{proof}
Clearly any BC protocol can be represented as a DAG by converting every edge in the tree representation to a directed parent $\rightarrow$ child edge.

Conversely, let $G$ be a DAG representing a cake-cutting protocol. We will produce a tree representation of the same protocol. The root and nodes with a single parent can be translated directly into the tree. For nodes with multiple parents, the key idea is to make one copy of the node and its descendants for each parent. Every path from the root to the node in the initial graph $G$ corresponds to a path from the root to one of the copies of the node in the resulting graph. Therefore the new graph represents the same protocol. By repeating this step until there are no more nodes with multiple parents, we will obtain a tree representation of the protocol. 

This algorithm will always terminate, but to eliminate repeated work we want to ensure that a node's descendants are converted to tree nodes before the node itself (otherwise we might have to run the process multiple times for copies of the same descendant). We can do this by finding a topological ordering of the tree, and then converting the nodes in reverse topological order.

Therefore, any protocol represented as a DAG can also be represented as a tree, so the protocols that can be represented as DAGs are precisely the BC protocols.
\end{proof}

\subsection{Extended BC Protocols}\label{sec:bcext}

We will now consider a modified definition of BC protocols, which allows an agent to choose which piece of cake to cut from a sequence of consecutive pieces, and allows for allocating a sequence of pieces to some agent at once by only specifying the endpoints of the sequence.

The advantage of this more flexible definition is that we can ignore some extraneous cuts in the execution of the protocol, as we will see, for example, in the proof of theorem \ref{cuts-before-choices}. We will show that, in fact, this extended definition has the same expressive power as the original one.

\begin{itemize} 
\item Instead of having some agent $i$ cut into a single piece $[x, y]$ with no other cuts inside it, we allow for \emph{cut} nodes that ask agent $i$ to make a cut between two existing cuts $x$ and $y$ (which might not be consecutive). 
\item At \emph{leaves}, instead of allocating each interval, we allow for allocating the piece of cake between two (potentially non-consecutive) cuts $x$ and $y$ to some agent, with the constraint that $x < y$ (so if the ordering of $x$ and $y$ cannot be determined solely from the structure of the protocol, and depends on the decisions of the agents, this is not a valid allocation). Note this is essentially the same as BC protocol allocation, except the allocation can be written more concisely, ignoring intermediate cuts if consecutive intervals are allocated to the same agent.

\begin{example} 
If agent $1$ makes a cut at $x \in [0,1]$ and agent $2$ then makes a cut at $y \in [0,1]$, we cannot have an allocation $[0,x] \mapsto 1$, $[x, y] \mapsto 2$, $[y,1] \mapsto 1$, because it might be the case that $y < x$. But if instead we restrict the second cut to $y \in [x,1]$, the allocation is valid. In both cases,  $[0,x] \mapsto 1$, $[x, 1] \mapsto 2$ is also a valid allocation.
\end{example}
\end{itemize}

We will refer to protocols covered by this definition as \emph{extended BC protocols}. Note that indeed any protocol covered by the original definition of BC protocols is covered by the modified definition, with no changes needed. 

\begin{theorem}\label{extended-bc-protocols}
For any extended BC protocol, we can construct a strongly envy-equivalent BC protocol. This results in a $\Theta(n!)$ blowup in the size of the tree representation of the protocol.
\end{theorem}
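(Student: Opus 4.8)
The plan is to convert an extended BC protocol into an ordinary one by a single pass over the tree that eliminates the two extended features—cuts placed between non-consecutive cuts and span allocations—replacing the ambiguity they introduce with explicit branching controlled by exactly the agent who would otherwise have resolved it implicitly. The structural fact I would record first, and which drives everything, is the distinction between the two models: in a regular BC protocol every cut is made \emph{inside one already-existing piece}, so along any root-to-leaf path the relative order of all cut points is completely determined by the tree structure; by contrast, in the extended model a cut placed ``between $x$ and $y$'' can fall into any of the several pieces currently subdividing $[x,y]$, leaving its rank among the intermediate cuts undetermined. The conversion must therefore make these ambiguous ranks explicit before any position-based reference is possible.

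I would process the tree from the root downwards, so that by the time a node is reached the order of all previously made cuts is already fixed within the current branch. The core step replaces each extended cut node ``agent $i$ cuts between $x$ and $y$'' by a choose node controlled by agent $i$, whose branches correspond to the consecutive sub-pieces $[x,c_1],[c_1,c_2],\ldots,[c_k,y]$ that currently span $[x,y]$, followed immediately by an ordinary cut node in which agent $i$ makes a single cut inside the chosen sub-piece. Because a cut position in $[x,y]$ corresponds bijectively to a pair (sub-piece, position within that sub-piece), and because both the branch choice and the cut are controlled by the same agent acting on the same information, the set of reachable cut positions and the resulting partition are unchanged; crucially, after this step the new cut's rank relative to the intermediate cuts is now fixed in each branch, which is exactly the commitment the physical cut made implicitly in the extended model.

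Since the order of all cuts is fixed along each resulting path, the leaves are easy to handle: at every leaf I rewrite each span allocation $[x,y]\mapsto a$ as the position-based allocation that assigns $a$ to every piece index whose interval lies between $x$ and $y$. The validity constraint $x<y$ guarantees these index ranges are well defined, and the requirement that the allocated spans tile the cake guarantees the resulting map $\{1,\ldots,m\}\to\{1,\ldots,n\}$ is total. For strong envy-equivalence I would then exhibit the induced correspondence between strategy profiles of the two protocols: agent $i$'s single cut decision is split into a branch choice plus a cut, a bijection that leaves every agent's information sets, action sets, and the induced allocation unchanged. Hence the value and the simultaneous envy bounds any agent can guarantee coincide, which is in fact stronger than strong envy-equivalence.

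The main obstacle is bounding the blowup and pinning it at $\Theta(n!)$. Each replaced cut node multiplies the size of its subtree by the number $p$ of pieces spanning $[x,y]$, so the blowup along a path is the product of these branching factors. In the worst case an agent repeatedly cuts across (almost) the whole cake, so the $j$-th such cut faces $j$ pieces and the factors multiply to $1\cdot 2\cdots k=k!$ after $k$ cuts; since a protocol dividing the cake among $n$ agents uses $\Theta(n)$ cuts, this yields the $O(n!)$ upper bound. For the matching lower bound I would construct a small extended protocol whose final allocation genuinely depends on the complete order of $\Theta(n)$ ambiguously placed cuts; because a regular BC protocol determines that order structurally, it must contain a distinct leaf for each of the $\Omega(n!)$ orderings, forcing an $\Omega(n!)$ blowup. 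The delicate points to verify are that the incremental top-down resolution keeps every referenced piece index well defined at the moment it is used, and that the branching enumerates each feasible ordering exactly once, neither dropping a consistent ordering nor duplicating one.
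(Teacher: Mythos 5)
Your construction is essentially identical to the paper's: each extended cut node is replaced, in top-down order so that the ranks of all earlier cuts are already resolved, by a choose node controlled by the cutting agent whose branches correspond to the consecutive sub-pieces spanning $[x,y]$, each followed by an ordinary cut node and a copy of the original subtree, after which the span allocations at the leaves are expanded piece by piece; the worst case of a chain of cuts across the whole cake gives the same factorial blowup computation. The only divergence is that you additionally sketch an $\Omega(n!)$ lower bound against \emph{any} equivalent BC protocol, which is stronger than what the theorem asserts (the paper only analyses the blowup incurred by this particular construction).
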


\begin{proof}

\textbf{Construction:} We convert the cut nodes in topological order, so when we reach a node, we know that all of its descendants are already ``valid'' nodes in the original definition. For a cut node ``agent $i$ cuts between $x$ and $y$'', suppose there are $n$ other cuts $x_1, ..., x_n$ between $x$ and $y$. Note $n$ is fixed, because we have already converted all of the current node's descendants, so we know the ordering of the pre-existing cuts at this stage. We replace our cut node with a choose node with $n$ cut node children, corresponding to cutting in $[x, x_1], [x_1, x_2],..., [x_n, y]$, respectively. Each child will have a copy of the initial cut node's subtree. 

We also need to modify the leaf nodes. After modifying the cut nodes, the ordering of the cuts at each leaf node is now fixed, so we can replace an allocation $[x, y] \mapsto j$ with one that allocates each continuous piece between $x$ and $y$ to agent $j$. 

\textbf{Complexity analysis:} This conversion process terminates, but it results in an $\Theta(n!)$ blowup in the size of the tree (where n is the initial number of nodes). Consequently, the running time is also $\Theta(n!)$.

	To see this, consider the worst case, in which we have a chain of $n$ cut nodes that are all of the form ``agent $j$ cuts in $[0,1]$ at $x_i$'', with $1 \leq i \leq n$). 
	
	For the first cut node we don't have to do anything. For the second cut node, we replace it with a choose node with $2$ branches corresponding to cuts in $[0, x_0]$ and $[x_0, 1]$, respectively.
	 
	In each of these two branches, there will be a copy of the third cut node, which needs to be replaced with a choose node with $3$ branches, etc.
	
	See figures \ref{fig:original-protocol} and \ref{fig:resulting-protocol} for an example.

\begin{figure}[H]
\centering
\fbox{%
\begin{tikzpicture}

\draw[fill=black] (0,0) circle (3pt);
\draw[fill=black] (0,-1.5) circle (3pt); 
\draw[fill=black] (0,-3) circle (3pt);

\node[anchor=west] at (0.5,0) {\textbf{CUT:} agent $1$ cuts $[0, 1]$ at $x$};
\node[anchor=west] at (0.5, -1.5) {\textbf{CUT:} agent $2$ cuts $[0,1]$ at $y$};
\node[anchor=west] at (0.5, -3) {\textbf{CUT:} agent $3$ cuts $[0,1]$ at $z$};

\draw[thick] (0,0) -- (0,-1.5) -- (0,-3);
\end{tikzpicture}}
\caption{Worst-case for n=3, before conversion.}
 \label{fig:original-protocol}
\end{figure}
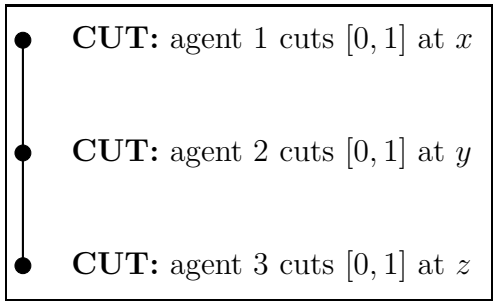

\begin{figure}[H]
\centering
\fbox{%
\resizebox{\textwidth}{!}{%
\begin{tikzpicture}

\draw[fill=black] (0,0) circle (3pt);
\draw[fill=black] (0,-1.5) circle (3pt); 
\draw[fill=black] (-5,-3) circle (3pt);
\draw[fill=black] (5,-3) circle (3pt);
\draw[fill=black] (-5,-5) circle (3pt);
\draw[fill=black] (5,-5) circle (3pt);
\draw[fill=black] (-10,-7) circle (3pt);
\draw[fill=black] (-5,-7) circle (3pt);
\draw[fill=black] (0,-7) circle (3pt);
\draw[fill=black] (5,-7) circle (3pt);
\draw[fill=black] (10,-7) circle (3pt);
\draw[fill=black] (15,-7) circle (3pt);

\node[anchor=west] at (0.5,0) {\textbf{CUT:} agent $1$ cuts into $[0, 1]$ at $x$};
\node[anchor=west] at (0.5, -1.5) {\textbf{CHOOSE:} agent $2$ chooses a branch};
\node[anchor=west] at (-5, -3.2) {\textbf{CUT:} agent $2$ cuts into $[0,x]$ at $y$};
\node[anchor=west] at (5, -3) {\textbf{CUT:} agent $2$ cuts into $[x,1]$ at $y$};

\node[anchor=west] at (-5, -4.8) {\textbf{CHOOSE:} agent $3$ chooses a branch};
\node[anchor=west] at (5, -4.8) {\textbf{CHOOSE:} agent $3$ chooses a branch};

\node[align=center,anchor=south] at (-10, -8.5) {\textbf{CUT:} 
\\
agent $3$ cuts into $[0,y]$ at $z$};
\node[align=center, anchor=south] at (-5, -8.5) {\textbf{CUT:} 
\\
agent $3$ cuts into $[y, x]$ at $z$};
\node[align=center, anchor=south] at (0, -8.5) {\textbf{CUT:}
\\
agent $3$ cuts into $[x,1]$ at $z$};
\node[align=center, anchor=south] at (5, -8.5) {\textbf{CUT:} 
\\
agent $3$ cuts into $[0,x]$ at $z$};
\node[align=center, anchor=south] at (10, -8.5) {\textbf{CUT:} 
\\
agent $3$ cuts into $[x, y]$ at $z$};
\node[align=center, anchor=south] at (15, -8.5) {\textbf{CUT:} 
\\
agent $3$ cuts into $[y, 1]$ at $z$};

\draw[thick] (0,0) -- (0,-1.5) -- (-5,-3) -- (0, -1.5) -- (5,-3);
\draw[thick] (-5,-3) -- (-5, -5);
\draw[thick] (5,-3) -- (5, -5);
\draw[thick] (-5,-5) -- (-10, -7) -- (-5,-5) -- (-5, -7) --  (-5,-5) -- (0, -7);
\draw[thick] (5,-5) -- (10, -7) -- (5,-5) -- (5, -7) --  (5,-5) -- (15, -7);

\end{tikzpicture}}}
\caption{Worst-case for n=3, after conversion.}
 \label{fig:resulting-protocol}
\end{figure}
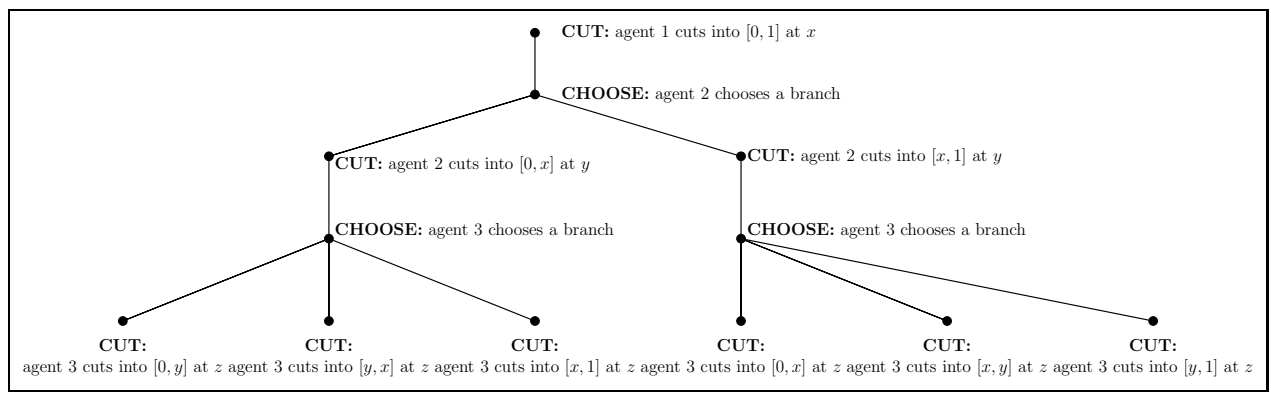

\end{proof}

\begin{corollary} \label{extended-bc-protocols-equivalence}
BC protocols and extended BC protocols represent the same class of protocols, up to strong envy-equivalence. 
\end{corollary}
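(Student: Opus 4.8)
The plan is to obtain the corollary directly from Theorem~\ref{extended-bc-protocols}, together with the already-noted fact that every BC protocol is literally an extended BC protocol requiring no modification. Reading ``represent the same class of protocols up to strong envy-equivalence'' as the two-sided statement that every protocol expressible in one model is strongly envy-equivalent to one expressible in the other, I would prove the two inclusions separately.

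First I would record that strong envy-equivalence is an equivalence relation on protocols; in particular it is reflexive and symmetric. Both are immediate from the definition: the phrase ``in one of the protocols \dots in the other protocol'' is symmetric in $\mathcal{P}$ and $\mathcal{P}'$, and any protocol can guarantee against itself exactly the envy bounds it can guarantee. (Transitivity, though not needed here, follows by chaining the implications.) This justifies treating ``up to strong envy-equivalence'' as a genuine quotient on which the claim can be phrased.

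For the forward inclusion, every BC protocol is by definition an extended BC protocol — the extended definition only relaxes the cut and leaf constraints, so no changes are needed — hence it is (trivially, by reflexivity) strongly envy-equivalent to an extended BC protocol. For the reverse inclusion, Theorem~\ref{extended-bc-protocols} states exactly that every extended BC protocol admits a strongly envy-equivalent BC protocol. Combining the two directions, the set of strong-envy-equivalence classes realised by BC protocols coincides with the set realised by extended BC protocols, which is the claim.

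I do not expect a genuine obstacle, since all the substantive work — converting each extended cut node into a choose node with consecutive-cut children (with the attendant $\Theta(n!)$ blow-up) and rewriting the leaf allocations once the cut ordering is fixed — has already been carried out in the proof of Theorem~\ref{extended-bc-protocols}. The only point requiring minor care is confirming that ``same class up to equivalence'' is precisely the conjunction of the two inclusions above, so that nothing beyond reflexivity of the relation and the single converse direction supplied by the theorem is needed.
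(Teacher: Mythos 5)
Your proposal is correct and matches the paper's (implicit) argument exactly: the corollary is stated without a separate proof precisely because it follows from Theorem~\ref{extended-bc-protocols} in one direction and from the paper's earlier observation that every BC protocol is already an extended BC protocol in the other. Your added remarks on reflexivity and symmetry of strong envy-equivalence are a harmless formalisation of the ``up to equivalence'' phrasing and introduce no new content or gaps.
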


\subsection{Converting BC Protocols to ``Cuts Before Choices'' Form}\label{sec:bccuts}
In this section, we will prove that any BC protocol can be converted to a ``nice'' form in which all the cut nodes come before all the choose nodes, so, intuitively, the agents first cut up the cake, and then they make choices between branches leading to various allocations. 

\begin{theorem}\label{cuts-before-choices}
Any extended BC protocol can be converted into a strongly envy-equivalent extended BC protocol in which no choose node in the tree representation of the protocol has a cut node descendant. The conversion algorithm runs in polynomial time, and the size (number of nodes) of the original tree is equal to the size of the resulting tree. 
\end{theorem}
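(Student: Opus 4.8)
The plan is to reach the ``cuts before choices'' form by repeatedly applying a single local rewrite that pulls a cut node above a choose node, and to argue that each such rewrite both preserves the number of nodes and preserves strong envy-equivalence. The elementary move is as follows. Suppose $C$ is a choose node controlled by agent $a$ with children (subtree roots) $K, T_2, \ldots, T_k$, where $K$ is a cut node controlled by some agent $b$ (``agent $b$ cuts between existing cuts $x$ and $y$'') whose unique child is the root of a subtree $S$. Since choose nodes make no cuts, the set of cuts present before $K$ equals the set present before $C$, so $x$ and $y$ already exist above $C$ and $K$ may legally be placed there. I replace this configuration by: the cut node $K$ in the position formerly occupied by $C$, with unique child $C$, and $C$ now a choose node whose children are $S, T_2, \ldots, T_k$. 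No node is created or deleted, so the tree size is preserved exactly; this is the source of the ``equal size'' claim. In the branches $T_2, \ldots, T_k$ the cut made at $K$ is now present but unused: because we work with extended BC protocols, every cut node below references its endpoints (not piece indices) and every leaf allocates pieces by their endpoints, so an extra interior cut changes nothing, and the leaf allocations in $T_2, \ldots, T_k$ need no modification. This is exactly the ``ignore extraneous cuts'' feature for which extended BC was introduced.

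For termination and running time I would track the potential $\Phi = \sum_{K \text{ a cut node}} (\#\text{choose-node ancestors of } K)$. The move decreases the choose-ancestor count of $K$ by one (it loses $C$) and leaves that count unchanged for every other cut node: nodes of $S$ keep exactly the same ancestor set (with $C,K$ merely reordered), while nodes of $T_m$ gain only the cut node $K$ as a new ancestor, which does not count. Hence $\Phi$ drops by exactly $1$ per move. As $\Phi \ge 0$ and is initially $O(N^2)$ for a tree of $N$ nodes, only polynomially many moves occur, each requiring polynomial work, giving a polynomial-time algorithm. When no move is available, no cut node has a choose-node ancestor, i.e.\ no choose node has a cut-node descendant, which is the desired form.

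The heart of the proof is showing that one move preserves strong envy-equivalence. I would first observe that BC protocols are games of \emph{perfect information}: at every node the acting agent observes the entire history of cuts and branch choices. Hence, fixing an agent $i$ and any payoff $g$ that is a function of the final allocation, the relevant quantity $\mathrm{val}_i(g) := \inf_{\sigma_i}\sup_{\sigma_{-i}} g$ is computed by backward induction, minimizing at $i$'s nodes and maximizing at all other nodes; and ``$i$ can simultaneously guarantee $envy(i,j)\le M_j$ for all $j\in S$'' is precisely $\mathrm{val}_i(g)\le 0$ for $g = \max_{j\in S}(envy(i,j) - M_j)$. It therefore suffices to show the move preserves $\mathrm{val}_i(g)$ for every agent $i$ and every allocation-payoff $g$. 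Writing $w(z)$ for the backward-induction value of $S$ after the cut $z$, and $u_m$ for the value of $T_m$, the key point is that $u_m$ is independent of $z$: the extra cut is payoff-irrelevant and changes no action set in $T_m$, and under perfect information a payoff-irrelevant observation cannot change a game's value. Comparing the value at the swapped region before and after the move then reduces, in each of the three cases $i=a$, $i=b$, and $i\notin\{a,b\}$, to the elementary identities
\[ \sup_z \min(w(z),c) = \min\bigl(\sup_z w(z),\, c\bigr), \qquad \inf_z \max(w(z),d) = \max\bigl(\inf_z w(z),\, d\bigr), \]
where $c$ (resp.\ $d$) is the min (resp.\ max) of the constants $\{u_m\}$, combined according to whether $a$ is minimizing or maximizing; the case $i\notin\{a,b\}$ is immediate since both reorderings are suprema that commute. (When $a=b$ the operation on $z$ and the choice of branch are performed by the same player and collapse even more directly.) Since these identities hold for arbitrary $w$ and arbitrary constant caps, $\mathrm{val}_i(g)$ is unchanged, so each move preserves strong envy-equivalence, and hence so does their composition.

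I expect the main obstacle to be this envy-equivalence step rather than the bookkeeping. Naively, moving the cut above the choice gives the chooser strictly more information, which looks as though it should help; the reason it does not is exactly the min--max identity above, which is available \emph{because} the values of the untouched branches $T_m$ are constants in $z$. Care is also needed to confirm rigorously that the extra interior cut is genuinely inert in $T_m$: it neither restricts any later ``cut between $x'$ and $y'$'' action nor alters any endpoint-specified allocation, and it does not change the ordering of any pre-existing cut. This is where the extended-BC definition does the real work, and it is why the theorem is stated for extended rather than plain BC protocols.
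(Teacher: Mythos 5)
Your construction is the same as the paper's: the identical local rewrite (hoist a cut node above its choose-node parent, leave the hoisted node's original branch alone, and let the other branches ignore the now-extraneous cut, which is exactly what the extended-BC definition permits), iterated until no choose node has a cut-node child, with the observation that this already forces no cut-node descendants. The size-preservation and termination claims match; your potential function $\Phi$ is a cleaner accounting than the paper's ``each cut node moves up at most $n$ times,'' but it proves the same polynomial bound. Where you genuinely diverge is the envy-equivalence step. The paper disposes of it in one sentence (``the conversion algorithm does not affect the allocation''), which implicitly treats the two games as having the same strategy-to-allocation map; but the rewrite reorders a cut and a choice, so in the new protocol the chooser can condition on the cut point $z$, and in the branch actually containing the cut the cutter now commits to $z$ before seeing the branch choice --- the outcome \emph{functions} are not literally identical as maps on strategy profiles. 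Your backward-induction argument, reducing the worst-case guarantee to $\mathrm{val}_i(g)$ for payoffs of the form $\max_{j\in S}(envy(i,j)-M_j)$ and invoking $\sup_z\min(w(z),c)=\min(\sup_z w(z),c)$ and its dual (valid precisely because the values $u_m$ of the untouched branches do not depend on $z$), is the missing justification that the swap changes no agent's guaranteeable bounds. So: correct, same algorithm, but you have supplied a rigorous proof of the equivalence claim that the paper only asserts; the cost is the extra game-theoretic machinery, and the benefit is that the one point where the paper's argument is genuinely incomplete is now closed.
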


\begin{proof}

Let $\mathcal{P}$ be a cake-cutting protocol. If the protocol already satisfies the required condition, we are done. Otherwise, suppose the extended BC tree representation of $\mathcal{P}$ contains a choose node with a cut node descendant. In particular, there must be a choose node with a cut node child ``agent $i'$ cuts between $x_k$ and $x_{k+1}$ at point $y$''.

We will describe an algorithm to move the cut node above the choose node (as in figures \ref{fig:before} and \ref{fig:after}) without affecting the resulting allocation (each agent will be assigned the same piece of cake as in the initial protocol, except potentially cut up differently). The advantage of this form is that it makes it easier to reason about the protocol.

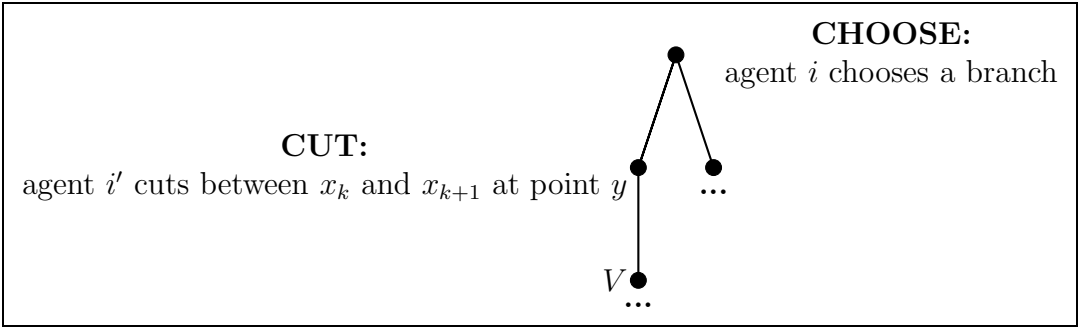
\begin{figure}[H]
\centering
\fbox{%
\begin{tikzpicture}

\draw[fill=black] (0,-3) circle (3pt);
\draw[fill=black] (-0.5,-4.5) circle (3pt);
\draw[fill=black] (-0.5, -6) circle(3pt);
\draw[fill=black] (0.5, -4.5) circle(3pt);

\node[align=center, anchor=west] at (0.5, -3) (choose) {\textbf{CHOOSE:} \\ agent $i$ chooses a branch};

\node[align=center, anchor=east] at (-0.5, -4.5) (cut-branch) {\textbf{CUT:} 
\\
agent $i'$ cuts between $x_k$ and $x_{k+1}$ at point $y$};

\node[align=center, anchor=south] at (0.5, -5) (other-branches) {\textbf{...}};

\node[align=center, anchor=east] at (-0.5, -6) (cut-nodes-child) {$V$};

\node[align=center, anchor=south] at (-0.5, -6.5) (V-branch) {\textbf{...}};

\draw[thick] (0,-3) -- (-0.5, -4.5) -- (0, -3) -- (0.5, -4.5);
\draw[thick] (-0.5, -4.5) -- (-0.5, -6);
\end{tikzpicture}}
\caption{Choose node with cut node child.}
 \label{fig:before}
\end{figure}

\begin{figure}[h]
\centering
\fbox{%
\begin{tikzpicture}

\draw[fill=black] (0, -1.5) circle (3pt);
\draw[fill=black] (0,-3) circle (3pt);
\draw[fill=black] (-0.5,-4.5) circle (3pt);
\draw[fill=black] (0.5, -4.5) circle(3pt);

\node[align=center, anchor=west] at (0.5, -3) (choose) {\textbf{CHOOSE:} agent $i$ chooses a branch};

\node[align=center, anchor=west] at (0.5, -1.5) (cut-branch) {\textbf{CUT:} agent $i'$ cuts between $x_k$ and $x_{k+1}$ at point $y$};

\node[align=center, anchor=south] at (0.5, -5) (other-branches) {\textbf{...}};

\node[align=center, anchor=east] at (-0.5, -4.5) (cut-nodes-child) {$V$};

\node[align=center, anchor=south] at (-0.5, -5) (V-branch) {\textbf{...}};

\draw[thick] (0, -1.5) -- (0,-3) -- (-0.5, -4.5) -- (0, -3) -- (0.5, -4.5);
\end{tikzpicture}}
\caption{Subtree obtained by moving the cut node above the choose node in \ref{fig:before}.}
 \label{fig:after}
\end{figure}
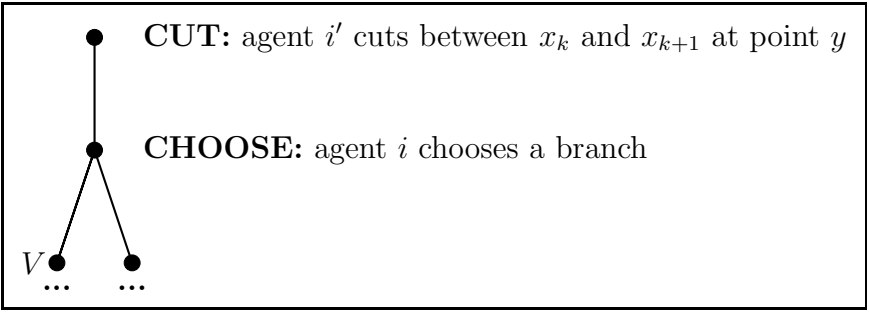

Note the cut node only has one child, call it $V$, which can be any kind of node. Moving the cut node above the choose node will not affect the subprotocol defined by $V$'s branch. In the other branches, we just ignore the cut at $y$ (which extended BC protocols allow) and execute the protocol exactly as before. 

We repeat this process until there is no choose node with a cut node child. This protocol will contain no choose node with a cut node descendant. Since the conversion algorithm does not affect the allocation, any simultaneous bounds on an agent's envy towards other agents in the initial protocol will carry over to the resulting protocol, so the two are strongly envy-equivalent. 

Furthermore, the running time of the conversion algorithm is polynomial (we move each cut node up the tree at most $n$ times, where $n$ is the total number of nodes, and there are at most $n$ cut nodes).

Since no nodes are added and deleted, the size of the resulting tree is equal to the size of the original tree. 
\end{proof}

\begin{lemma}\label{bc-protocols-intermediate-form}
Any BC protocol can be converted into a strongly envy-equivalent BC protocol in which every choose node either has:
\begin{itemize}
\item no cut node descendants, or
\item only cut node children, in which case if the choose node is controlled by agent $i$, the cut node children are guaranteed to be controlled by agent $i$ and correspond to cuts in a sequence of consecutive pieces.
\end{itemize}
The conversion results in a $\Theta(n!)$ blowup in the size of the tree representation.
\end{lemma}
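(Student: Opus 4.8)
The plan is to derive this intermediate form by composing the two preceding results. Starting from a BC protocol $\mathcal{P}$, I would first regard it as an extended BC protocol (every BC protocol is one) and apply Theorem~\ref{cuts-before-choices} to obtain a strongly envy-equivalent extended BC protocol $\mathcal{P}'$ in which no choose node has a cut node descendant. I would then apply the back-conversion from the proof of Theorem~\ref{extended-bc-protocols} to turn $\mathcal{P}'$ into a plain BC protocol $\mathcal{P}''$, which is strongly envy-equivalent to $\mathcal{P}'$, and hence to $\mathcal{P}$. All that then remains is to read off the structure of $\mathcal{P}''$.

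The core of the argument is to check that $\mathcal{P}''$ has exactly the claimed form, by sorting its choose nodes into two kinds. The choose nodes already present in $\mathcal{P}'$ have no cut node descendants; since in cuts-before-choices form every extended cut node sits strictly above every choose node, the back-conversion (which only rewrites the extended cut nodes and copies the subtrees hanging below them) adds no cut node below any of these choose nodes, so they and all of their copies still satisfy the first bullet. The remaining choose nodes are exactly those created by the back-conversion: each extended cut node ``agent $i$ cuts between $x$ and $y$'' is replaced by a choose node, controlled by agent $i$, whose children are cut nodes---again controlled by agent $i$---cutting in the consecutive pieces $[x,x_1],[x_1,x_2],\dots,[x_n,y]$. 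Such a node has only cut node children of the required kind, so it satisfies the second bullet. As these exhaust the choose nodes of $\mathcal{P}''$, the structural claim follows, and strong envy-equivalence is inherited from the two results being composed.

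For the size bound, Theorem~\ref{cuts-before-choices} preserves the node count exactly, so the input to the back-conversion is the same size as $\mathcal{P}$ and the back-conversion contributes the $\Theta(n!)$ blowup of Theorem~\ref{extended-bc-protocols}; a worst-case family analogous to the chain of cuts in Figures~\ref{fig:original-protocol}--\ref{fig:resulting-protocol} shows the bound is tight. I expect the main obstacle to be the bookkeeping in the second paragraph: one must argue carefully that it is precisely the cuts-before-choices rearrangement that guarantees every surviving ``choose-with-a-cut-descendant'' node is one of the freshly created nodes, and that the copied subtrees below the original choose nodes are left untouched so their ``no cut descendant'' property is preserved. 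A minor degenerate case is an extended cut between already-consecutive cuts, for which the replacement choose node has a single cut child; this still fits the second bullet (or the node can simply be kept as a plain cut node).
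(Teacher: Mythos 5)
Your proposal matches the paper's own proof essentially step for step: view $\mathcal{P}$ as an extended BC protocol, apply Theorem~\ref{cuts-before-choices} to reach cuts-before-choices form, then run the back-conversion of Theorem~\ref{extended-bc-protocols} and observe that the pre-existing choose nodes keep having no cut descendants while the freshly created choose nodes have only same-agent cut children over consecutive pieces, with the $\Theta(n!)$ blowup coming from that last conversion. The argument is correct and no further changes are needed.
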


\begin{proof}
Let $\mathcal{P}$ be a BC protocol. In particular,  $\mathcal{P}$ can be viewed as an extended BC protocol, so by theorem \ref{cuts-before-choices} we can convert it to a strongly envy-equivalent ``cuts before choices'' extended BC protocol $\mathcal{P}'$ in which no choose node in the tree representation of the protocol has a cut node descendant, resulting in a tree similar to the one in the figure below.

\begin{figure}[H]
\centering
\fbox{%
\begin{tikzpicture}

\draw[fill=black] (0,0) circle (3pt);
\draw[fill=black] (0,-1.5) circle (3pt); 
\draw[fill=black] (0,-3) circle (3pt);
\draw[fill=black] (0, -4) circle(3pt);

\node[anchor=west] at (0.5,0) {\textbf{CUT:} agent $1$ cuts $[0,1]$ at $x_1$};
\node[anchor=west] at (0.5, -1.5) {\textbf{CUT:} agent $2$ cuts $[0,1]$ at $x_2$};
\node[anchor=west] at (0.5, -2.25) {\textbf{...}};
\node[anchor=west] at (0.5, -3) {\textbf{CUT:} agent $i$ cuts $[0,1]$ at $x_i$};
\node[anchor=west] at (0.5, -4) {\textbf{CHOOSE:} agent $1$ chooses a branch};

\draw[thick] (0,0) -- (0,-1.5);
\draw[thick] (0,-3) -- (0,-4);
\draw[thick] (0,-4) -- (0, -5); 
\draw[thick] (0,-4) -- (1, -5); 
\draw[thick] (0,-4) -- (2, -5); 

\end{tikzpicture}}
 \label{fig:intermediate-step1}
\end{figure}

We can then convert $\mathcal{P}'$ back to a BC protocol using the algorithm in theorem \ref{extended-bc-protocols}.

In the last conversion, a cut node ``agent $i$ cuts between $x_0$ and $x_n$'' in $\mathcal{P}'$ will either be unchanged, if there are no other cuts between $x_0$ and $x_n$, or it will be split into a choose node ``agent $i$ chooses a branch'' with $n$ cut node children ``agent $i$ cuts between $x_j$ and $x_{j+1}$'' corresponding to each of the intermediate pieces (where $x_1,...,x_{n-1}$ are all the existing cuts between $x_0$ and $x_n$, from left to right).

The choose nodes in $\mathcal{P'}$ will not be changed/moved, so they will still have no cut node descendants after the conversion.

Therefore the resulting protocol has the required form, and is strongly envy-equivalent to the original protocol $\mathcal{P}$. The $\Theta(n!)$ blowup in size comes from the last conversion. 
\end{proof}

\begin{theorem}\label{bc-cuts-before-choices}
Any BC protocol can be converted into a strongly envy-equivalent BC protocol in which no choose node in the tree representation of the protocol has a cut node descendant. 
\end{theorem}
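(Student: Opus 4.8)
The plan is to start from the intermediate form supplied by Lemma~\ref{bc-protocols-intermediate-form} and then remove the one remaining kind of offending node. After applying that lemma we have a strongly envy-equivalent BC protocol in which every choose node is of one of two types: those with no cut node descendant (already fine), and the \emph{gadget} choose nodes, controlled by some agent $i$, whose children are all cut nodes also controlled by agent $i$, cutting in a sequence of consecutive pieces $[x_0,x_1],\dots,[x_{k-1},x_k]$. So the whole task reduces to eliminating these gadget choose nodes while preserving strong envy-equivalence.

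The conceptual observation driving the elimination is that a gadget is nothing more than agent $i$ making a single free cut somewhere in the range $[x_0,x_k]$: choosing a branch $[x_{j-1},x_j]$ and then cutting inside it is exactly a cut placed anywhere in $[x_0,x_k]$, and since (by the construction in Theorem~\ref{extended-bc-protocols}) all $k$ children carry identical continuation subtrees, nothing downstream depends on the choice except through the position of the new cut. Hence a gadget carries the same information as a single extended cut node ``agent $i$ cuts between $x_0$ and $x_k$''. The obstruction to writing this as an ordinary BC cut is solely that the range $[x_0,x_k]$ is split by the interior cuts $x_1,\dots,x_{k-1}$; if that range were a single piece at the moment of the cut, agent $i$ could make one ordinary single-piece BC cut and no choose node would be needed.

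I would therefore remove the split rather than the choice. Return (via Theorem~\ref{cuts-before-choices} and Corollary~\ref{extended-bc-protocols-equivalence}) to the extended-BC ``cuts before choices'' form, whose cut phase is a single linear chain of (possibly non-consecutive) extended cuts followed by a cut-free tree of choose nodes. I would then reorder this chain into \emph{containment order}: whenever one cut's range strictly contains the point placed by another cut, the former is performed first. Under such an order every cut, at the moment it is made, cuts into a range that has not yet been subdivided, so it becomes an ordinary single-piece BC cut, and the extended allocations at the leaves likewise collapse to ordinary BC allocations. The choose region is untouched and still sees every cut point, so the result is a genuine BC protocol in which the cut nodes form an initial chain and hence no choose node has any cut node descendant.

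The main obstacle is verifying that this reordering is strongly envy-equivalent. The reorder only moves a cut $E$ (say controlled by agent $a$, with range $R$) past cuts whose points lie strictly inside $R$, and from the point of view of $E$ these interior cuts are precisely the ones it ignores; the intuition is that they are immaterial to how agent $a$ should place $E$ to guarantee its envy bounds, so depriving $a$ of observing them (and revealing $E$'s point to them instead) cannot change the bounds any agent can guarantee. Making this rigorous is the delicate part: one must check that containment order is consistent with the dependency that a cut may only reference already-existing endpoints, and, crucially, that changing which earlier cuts an agent observes does not alter the simultaneous envy bounds guaranteeable by any agent — i.e. that outcome-preservation of the cut phase upgrades to guarantee-preservation. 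This is exactly the point where one must exploit that the cuts being commuted are ignored, non-boundary cuts from the perspective of the cut moved above them.
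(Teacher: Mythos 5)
Your reduction to eliminating the ``gadget'' choose nodes from Lemma~\ref{bc-protocols-intermediate-form} matches the paper's starting point, and your observation that a gadget encodes a single extended cut over $[x_0,x_k]$ is correct. But the step that is supposed to finish the proof --- reordering the chain of extended cuts into ``containment order'' so that every cut acts on an undivided piece --- fails. Take the paper's own worst-case example (Figure~\ref{fig:original-protocol}): agents $1$, $2$, $3$ each cut $[0,1]$. Every cut's range contains every other cut's point, so your containment relation is cyclic and does not define an order; and no matter which cut you schedule second, its range $[0,1]$ is already subdivided by the first cut, so it is not an ordinary single-piece BC cut. This is not a degenerate case: any protocol in which two agents each place a cut anywhere in the same region (Dubins--Spanier, Even--Paz, the first two cuts of Selfridge--Conway from agent $2$'s perspective) has this shape, and restricting the second agent to one fixed sub-piece would genuinely change the envy bounds they can guarantee. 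So the choose-then-cut gadget cannot be reordered away, and the $\Theta(n!)$ cost it carries is intrinsic.

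What your proposal is missing is the actual mechanism the theorem needs: a way to hoist a cut node above its choose-node parent \emph{while keeping the result a BC protocol}. When ``agent $i$ cuts at $y_0\in[0,x_1]$'' is moved above the choose node, the sibling branches now contain an extra cut at $y_0$ that their own descendant cut nodes do not know about; each such descendant cut whose piece straddles $y_0$ must be split into two cut nodes (one per sub-piece), and this splitting propagates down every level, after which the allocation nodes must be refined to list the extra sub-intervals. Repeating this level by level (and re-hoisting the newly exposed cuts past the top choose node) is what terminates the induction and yields the cuts-before-choices form; the envy-equivalence argument then rests on the fact that none of these splits changes which agent receives which portion of cake. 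Without some substitute for this splitting construction, the proposal does not establish the theorem.
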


\begin{proof}
Let $\mathcal{P}$ be a BC protocol.

We start by converting $\mathcal{P}$ to the form in lemma \ref{bc-protocols-intermediate-form}. The resulting tree will look like in figure \ref{fig:intermediate-step2}: a chain of $m-1$ cut nodes, followed by a choose node with $\leq m$ branches, then a level with $\leq m$ cut nodes, one in each branch, then $\leq m$ choose nodes each with $\leq m+1$ branches, etc., until the last level of cut nodes (whose choose node parents will each have $\leq m+k$ branches, where $k$ is the number of levels of cut nodes). Below that level, the tree only contains choose nodes and allocation nodes.

\begin{figure}[H]
\centering
\fbox{%
\resizebox{\textwidth}{!}{%
\begin{tikzpicture}

\draw[fill=black] (0,3) circle (3pt);
\draw[fill=black] (0,1.5) circle (3pt);
\draw[fill=black] (0,0) circle (3pt);
\draw[fill=black] (0,-1.5) circle (3pt); 
\draw[fill=black] (-5,-3) circle (3pt);
\draw[fill=black] (5,-3) circle (3pt);
\draw[fill=black] (-5,-5) circle (3pt);
\draw[fill=black] (5,-5) circle (3pt);
\draw[fill=black] (-11,-7) circle (3pt);
\draw[fill=black] (-5,-7) circle (3pt);
\draw[fill=black] (0,-7) circle (3pt);
\draw[fill=black] (5,-7) circle (3pt);
\draw[fill=black] (11,-7) circle (3pt);
\draw[fill=black] (17,-7) circle (3pt);

\node[anchor=west] at (0.5,3) {\textbf{CUT} at $x_1 \in [0, 1]$};
\node[anchor=west] at (0.5,1.5) {\textbf{CUT} at $x_2 \in [x_1, 1]$};
\node[anchor=west] at (0.5,0.75) {\textbf{...}};
\node[anchor=west] at (0.5,0) {\textbf{CUT} at $x_{m-1} \in [x_{m-2}, 1]$};
\node[anchor=west] at (3, 1) {\textbf{$m-1$ cut nodes}};

\node[anchor=west] at (0.5, -1.5) {\textbf{CHOOSE:} agent i chooses a branch \textbf{($\leq m$ branches)}};
\node[align=center, anchor=south] at (0, -2.5) {\textbf{.    .    .}};
\node[anchor=west] at (-5, -3.2) {\textbf{CUT:} agent $i$ cuts at $y_0 \in [0,x_1]$};
\node[anchor=west] at (5, -3) {\textbf{CUT:} agent $i$ cuts at $y_m \in [x_{m-1}, 1]$};

\node[anchor=west] at (-5, -4.8) {\textbf{CHOOSE:} agent $j$ chooses \textbf{($\leq m+1$ branches)}};
\node[anchor=west] at (5, -4.8) {\textbf{CHOOSE:} agent $j$ chooses \textbf{($\leq m+1$ branches)}};

\node[align=center,anchor=south] at (-10, -8.5) {\textbf{CUT:} 
\\
agent $j$ cuts at $z_0 \in [0, y_0]$};
\node[align=center, anchor=south] at (-5, -8.5) {\textbf{CUT:} 
\\
agent $j$ cuts at $z_1 \in [y_0, x_1]$};
\node[align=center, anchor=south] at (0, -8.5) {\textbf{CUT:}
\\
agent $j$ cuts at $z_m \in [x_{m-1}, 1]$};

\node[align=center, anchor=south] at (-4, -6.5) {\textbf{.    .    .}};

\node[align=center, anchor=south] at (5, -8.5) {\textbf{CUT:} 
\\
agent $j$ cuts at $z_0' \in [0, x_1]$};

\node[align=center, anchor=south] at (6, -6.5) {\textbf{.    .    .}};

\node[align=center, anchor=south] at (11, -8.5) {\textbf{CUT:} 
\\
agent $j$ cuts at $z_{m-1}' \in [x_{m-1}, y_m]$};

\node[align=center, anchor=south] at (17, -8.5) {\textbf{CUT:} 
\\
agent $j$ cuts at $z_m' \in [y_m, 1]$};

\draw[thick] (0,3) -- (0,1.5) -- (0,0) -- (0,-1.5) -- (-5,-3) -- (0, -1.5) -- (5,-3);
\draw[thick] (-5,-3) -- (-5, -5);
\draw[thick] (5,-3) -- (5, -5);
\draw[thick] (-5,-5) -- (-11, -7) -- (-5,-5) -- (-5, -7) --  (-5,-5) -- (0, -7);
\draw[thick] (5,-5) -- (11, -7) -- (5,-5) -- (5, -7) --  (5,-5) -- (17, -7);

\end{tikzpicture}
}
}
\caption{Protocol in the form in lemma \ref{bc-protocols-intermediate-form}.}
\label{fig:intermediate-step2}
\end{figure}
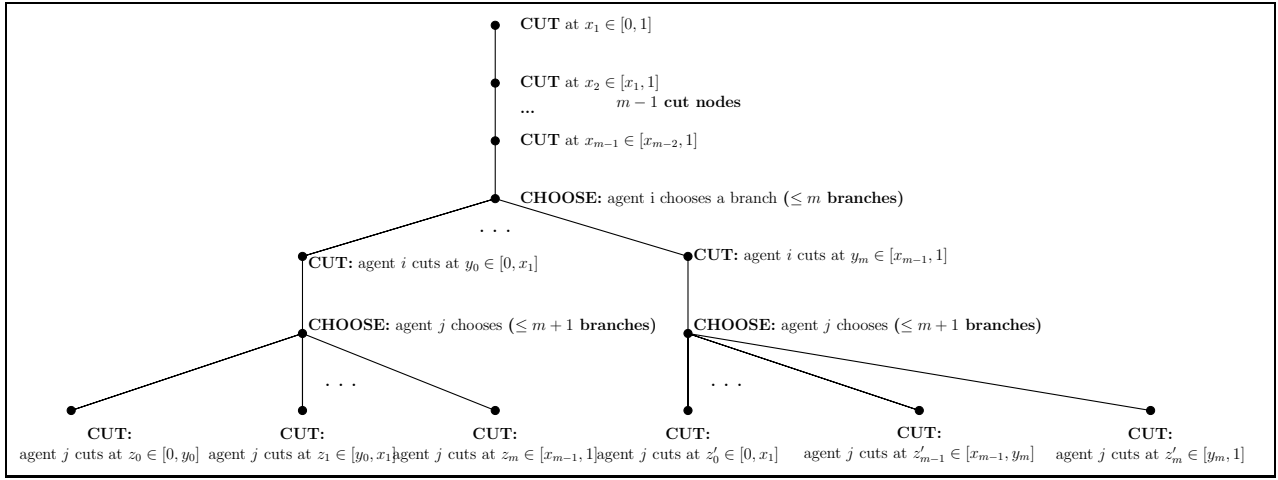

We can assume that we are in the case where we always have the maximum number of branches at each choose node, i.e. all the ``$\leq$'' signs above are equalities. In particular, that means that at a choose node with cut node children, we will have one cut node corresponding to each contiguous piece of cake, so if there are $n$ cuts $x_1...,x_{n}$ when we reach the choose node, it will have cut node children corresponding to cuts into $[0,x_1], [x_1,x_2]...$ up to $[x_{n}, 1]$. 

We can make this assumption because if we want to restrict the cut to the interval $[x_i, x_j]$, we can just remove the branches corresponding to making a cut outside of that interval at the end of the protocol.

We now aim to convert this protocol to one where no choose node has a cut node descendant, by moving cut node children above their choose node parent one by one, as in figures \ref{fig:before} and \ref{fig:after}. 

The difficulty here is that unlike with extended BC protocols in theorem \ref{cuts-before-choices}, when we move a cut node above its choose node parent, we need to modify the other branches of the choose node to account for the extra cut, so that the resulting protocol is still a BC protocol.

Suppose there are $k$ levels of cut nodes below the topmost choose node.

If $k=0$, there are no cut nodes below the topmost choose node, so we are done.

For $k>0$, we start with the topmost level of cut nodes, which will be the children of the topmost choose node. There are $m$ cut nodes ``agent $i$ cuts at $y_j$ into $[x_j, x_{j+1}]$'', $j \in \{0,...,m\} $, where $x_0=0, x_{m} = 1$.

We start by moving the leftmost cut node, ``agent $i$ cuts at $y_0 \in [0,x_1]$'', above its parent. We do not need to modify the branch that cut node was in, but for the other $m-1$ branches, we need to do the following in each branch:
\begin{itemize}
\item the topmost cut node is unchanged;
\item on the second level of cut nodes, we need to split the cut node ``agent $j$ cuts at $z_0' \in [0, x_1]$'' into two cut nodes corresponding to cuts in $[0,y_0]$ and $[y_0, x_1]$. 
\item on the following level, we need to split exactly one cut node in each of the $m+2$ branches to account for the extra cut at $y_0$, etc.
\end{itemize}

In total, this will add $0$ cut nodes on the first level, $m-1$ on the second level, $(m-1)(m+1)$ on the second level,..., $(m-1)(m+1)(m+2)...(m+k)$ on the $k$-th (last) level. In particular, the process terminates.

We repeat this process for the other $m-1$ cut nodes on the first level. At the end, the tree will look like in figure \ref{fig:first-level-moved}, and there will be $k-1$ levels of cut nodes below the top choose node.

\begin{figure}[H]
\centering
\fbox{%
\resizebox{\textwidth}{!}{%
\begin{tikzpicture}

\draw[fill=black] (0,3) circle (3pt);
\draw[fill=black] (0,1.5) circle (3pt);
\draw[fill=black] (0,0) circle (3pt);
\draw[fill=black] (0,-1.5) circle (3pt); 
\draw[fill=black] (-5,-3) circle (3pt);
\draw[fill=black] (5,-3) circle (3pt);
\draw[fill=black] (-11,-5) circle (3pt);
\draw[fill=black] (-5,-5) circle (3pt);
\draw[fill=black] (0,-5) circle (3pt);
\draw[fill=black] (5,-5) circle (3pt);
\draw[fill=black] (11,-5) circle (3pt);
\draw[fill=black] (17,-5) circle (3pt);

\node[anchor=west] at (0.5,3) {\textbf{CUT}};
\node[anchor=west] at (0.5,1.5) {\textbf{CUT}};
\node[anchor=west] at (0.5,0.75) {\textbf{...}};
\node[anchor=west] at (0.5,0) {\textbf{CUT}};
\node[anchor=west] at (3, 1) {\textbf{$2m-1$ cut nodes}};

\node[anchor=west] at (0.5, -1.5) {\textbf{CHOOSE:} agent i chooses a branch \textbf{($m$ branches)}};
\node[align=center, anchor=south] at (0, -2.5) {\textbf{.    .    .}};

\node[anchor=west] at (-4, -3) {\textbf{CHOOSE:} agent $j$ chooses \textbf{($2m+1$ branches)}};
\node[anchor=west] at (6, -3) {\textbf{CHOOSE:} agent $j$ chooses \textbf{($2m+1$ branches)}};

\node[align=center,anchor=south] at (-10, -6.5) {\textbf{CUT:} 
\\
agent $j$ cuts ...};
\node[align=center, anchor=south] at (-5, -6.5) {\textbf{CUT:} 
\\
agent $j$ cuts ...};
\node[align=center, anchor=south] at (0, -6.5) {\textbf{CUT:}
\\
agent $j$ cuts ...};

\node[align=center, anchor=south] at (-4, -4.5) {\textbf{.    .    .}};

\node[align=center, anchor=south] at (5, -6.5) {\textbf{CUT:} 
\\
agent $j$ cuts ...};

\node[align=center, anchor=south] at (6, -4.5) {\textbf{.    .    .}};

\node[align=center, anchor=south] at (11, -6.5) {\textbf{CUT:} 
\\
agent $j$ cuts ...};

\node[align=center, anchor=south] at (17, -6.5) {\textbf{CUT:} 
\\
agent $j$ cuts ...};

\draw[thick] (0,3) -- (0,1.5) -- (0,0) -- (0,-1.5) -- (-5,-3) -- (0, -1.5) -- (5,-3);
\draw[thick] (-5,-3) -- (-5, -5);
\draw[thick] (5,-3) -- (5, -5);
\draw[thick] (-5,-3) -- (-11, -5) -- (-5,-3) -- (-5, -5) --  (-5,-3) -- (0, -5);
\draw[thick] (5,-3) -- (11, -5) --  (5,-3) -- (17, -5);

\end{tikzpicture}
}
}
\caption{Protocol after moving first level of cut nodes above choose node.}
\label{fig:first-level-moved}
\end{figure}
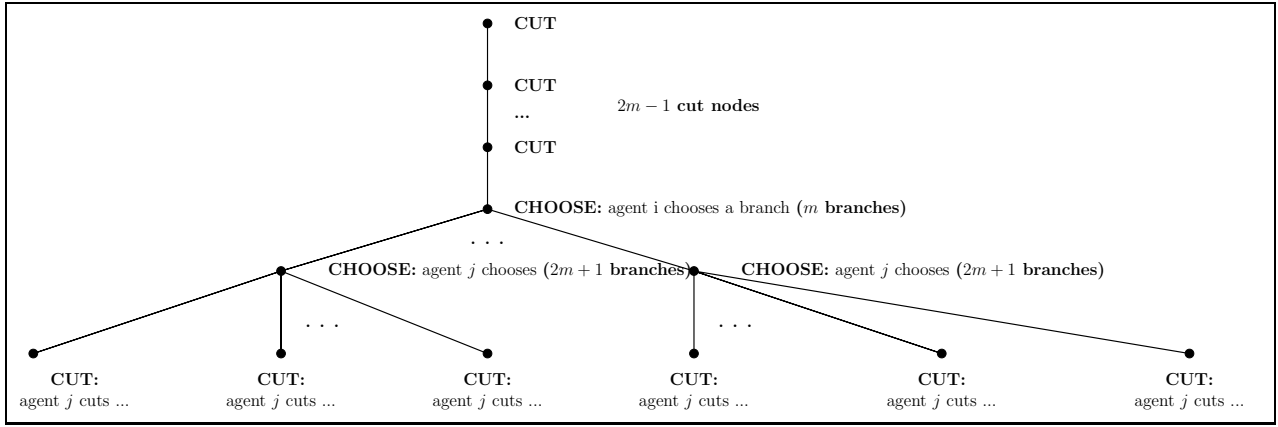

Now, each of the $m$ choose nodes below the top choose node have $2m+1$ cut node children. Starting from the leftmost of these choose nodes and its left most cut node child, we move the child above the parent, and then move it above the top choose node. We repeat this process until we have moved all $m(2m+1)$ cut nodes on this level above the top choose node.

Now there are $k-2$ levels of cut nodes below the top choose node. Recursively, we can move the cut nodes on each level above the top choose node. Since the number of levels decreases at each step, and the process for a single level terminates, the whole conversion process terminates. 

At the end, we also need to ``fix'' the allocation nodes to account for the extra cuts, by changing an allocation $[x,y] \rightarrow j$ to $[x, x_0] \rightarrow j,..., [x_N, y] \rightarrow j$, where $x_0,...x_N$ are all the cuts between $x$ and $y$. This takes time $O(ac)$ where $a$ is the number of allocation nodes and $c$ the number of cut nodes in the final tree.

The resulting protocol will have no cut nodes before the top choose node, so clearly it will contain no choose node with a cut node descendant. 

Note that pulling a cut node above its choose node parent and splitting cut nodes in the other branches as above does not affect the allocation, hence the protocols before and after each step are strongly envy-equivalent. Therefore the original protocol and the final protocol are strongly envy-equivalent.
\end{proof}

\subsection{Equivalence to GCC Protocols}\label{sec:bcgcc}

In this section, we will prove that GCC and BC protocols can express the same class of protocols, up to strong envy-equivalence.

\begin{theorem}\label{gcc-to-bc}
For any GCC protocol, there is a strongly envy-equivalent BC protocol.
\end{theorem}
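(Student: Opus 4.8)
The plan is to convert a GCC protocol into an \emph{extended} BC protocol by a single top-down traversal of the GCC tree, and then invoke Theorem~\ref{extended-bc-protocols} (equivalently Corollary~\ref{extended-bc-protocols-equivalence}) to turn the result into a genuine BC protocol. The three GCC node types would be mapped as follows. A GCC \emph{choose} node, where agent $i$ selects a piece $[x_j,y_j]$ from a set $S$ and is allocated it, becomes a BC \emph{choose} node at which agent $i$ picks the branch corresponding to that piece; the actual assignment of the piece to $i$ is not recorded at the node but deferred and written into the allocation at whichever leaf is reached along that branch. A GCC \emph{cut} node, where agent $i$ both selects a piece $[x_j,y_j]\in S$ and cuts it at some $z$, is split into a BC \emph{choose} node (agent $i$ chooses which piece, i.e. which branch) whose children are extended-BC \emph{cut} nodes ``agent $i$ cuts between $x_j$ and $y_j$''; using the extended definition here is exactly what lets us handle the case where a GCC piece has endpoints that are non-consecutive cuts. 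Finally, an \emph{if-else} node is not represented explicitly: since the condition depends only on which pieces were cut or chosen at previous steps, and in the BC tree this discrete history is precisely the path from the root, the condition is determined along each branch, so we simply continue the traversal into the GCC subtree selected by that condition.

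First I would fix a traversal that processes GCC nodes from the root downward, maintaining at each point the partition of the cake induced by the cuts so far together with the partial allocation (which pieces have already been claimed, and by whom). Whenever a discrete choice is introduced (a choose node, or the piece-selection component of a cut node), the BC tree branches, and each branch carries a distinct discrete history; this is what makes every subsequent if-else condition resolvable. When the traversal reaches a GCC leaf --- a state in which every piece has been claimed by some agent --- I would emit an extended-BC leaf whose allocation map sends each piece to the agent that claimed it, using the extended-BC convention that lets a maximal run of consecutive pieces going to one agent be written by its endpoints. Because each cut in the resulting tree is made strictly inside the (possibly non-consecutive) interval selected at its node, the left-to-right order of all cut points is forced by the tree structure, so every allocation emitted is a legal extended-BC allocation. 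Applying Theorem~\ref{extended-bc-protocols} then yields a BC protocol (at the cost, for this last step, of the stated $\Theta(n!)$ blow-up).

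To establish strong envy-equivalence I would exhibit a value-preserving correspondence between strategy profiles of the two protocols. Given any profile of strategies in the GCC protocol, the agents can reproduce it in the BC protocol by making, at each BC node, the discrete branch choice matching their GCC piece-selection and the same real-valued cut when cutting; conversely any BC profile projects back to a GCC profile. By construction both profiles terminate with the same allocation up to how the cake is subdivided --- each agent receives a union of pieces of identical total value under its valuation, by additivity --- so every $envy(i,j)$ is identical across the two runs. Consequently, for any agent $i$, any set of other agents, and any target bounds $M_j$, agent $i$ can guarantee the simultaneous bounds $envy(i,j)\le M_j$ against all adversary strategies in one protocol exactly when it can in the other, which is precisely strong envy-equivalence.

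The step I expect to be the main obstacle is making the if-else resolution and the deferred-allocation bookkeeping simultaneously rigorous: I must argue that the discrete history needed to evaluate every if-else condition is always recoverable from the BC path, that deferring each GCC mid-protocol allocation to the leaf keeps the allocated pieces disjoint and exhausts the cake, and that no later BC node ever cuts or re-assigns an already-claimed piece. A secondary delicate point is verifying that the order of cut points is genuinely determined by the tree --- so that the leaf allocations are legal --- even when GCC pieces span several existing cuts; this is exactly the situation the extended model was introduced to accommodate, so routing the construction through extended BC rather than BC directly is what keeps this manageable.
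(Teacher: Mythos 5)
Your core construction is the same as the paper's: process the GCC tree from the root down, turn a GCC choose node into a BC choose node whose branches record which piece was selected and defer the actual assignment to the leaves, turn a GCC cut node into a BC choose node (piece selection) whose children are cut nodes, and observe that every if-else condition is already resolved along each BC path so the node can be deleted and replaced by the selected branch. Under the interpretation the paper's proof actually uses --- the sets $S$ consist of contiguous pieces with no interior cuts --- your ``extended'' cut nodes are just ordinary BC cut nodes, the detour through Theorem~\ref{extended-bc-protocols} is unnecessary, and your argument coincides with the paper's.

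The genuine gap is in the one place you deviate: using an extended-BC cut node ``agent $i$ cuts between $x_j$ and $y_j$'' when the GCC piece $[x_j,y_j]$ contains interior cuts. Your claim that ``the left-to-right order of all cut points is forced by the tree structure'' is false in exactly this situation: if $w$ is an existing cut with $x_j < w < y_j$, the path does not record whether the new cut $z$ lands left or right of $w$. This breaks two things you rely on. First, GCC if-else conditions are permitted to depend on the order of previously made cut points, so a downstream if-else (and hence the downstream sets $S$ and allocations) may not be resolvable from the BC path. Second, a leaf allocation whose endpoints include $z$ and $w$ need not be a valid extended-BC allocation, since their order is not determined by the structure of the protocol. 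The paper handles the non-contiguous case differently (Remark~\ref{gcc-extensive-definition}): it first splits $[x_j,y_j]$ into the sub-pieces between consecutive cuts and inserts an if-else node recording which sub-piece the agent actually cut into, thereby reducing to the restricted case where the path does determine the full cut order. Your construction needs the analogous step --- branching on which sub-interval the cut fell into (which is precisely what the conversion of Theorem~\ref{extended-bc-protocols} does to a cut node, but you must apply it \emph{before} resolving downstream if-else nodes and emitting leaf allocations, not after).
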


\begin {proof}
We will refer to a tree representation of a GCC protocol as a  \emph{GCC-tree} (with GCC-cut, GCC-choose, and GCC-if-else nodes), and to a tree representation of a BC protocol as \emph {BC-tree} (with BC-cut and BC-choose nodes, and BC-leaves/allocation nodes).

Consider a protocol that can be represented as a GCC-tree T. We will construct a BC-tree representation of it by replacing each node $v$ in T and its subtree with an equivalent BC-subtree, in topological order. Before we start this process, we connect a BC allocation node to each leaf of T (with the actual allocation to be filled in later). 
\begin{itemize}
\item \textbf{If $v$ is a GCC-cut node:}  At node $v$, agent $i$ chooses a piece from a set $S$ and cuts into it. Since $S$ contains disjoint pieces, we can always extend it to a partition of the cake $S'$. We replace node $v$ with a BC-choose node, with $|S|$ children, so there is one child for each piece $P \in S$. The child corresponding to a piece $P$ will be a BC-cut node representing the action ``agent $i$ cuts into piece $P$''. The subtree of each child in the BC-tree will be a copy of $v$'s subtree. In the BC-choose node, the agent can choose which child to proceed to, so they effectively get a choice of which piece from $S$ to cut into.

\item \textbf{If $v$ is a GCC-choose node:} At node $v$, agent $i$ chooses a piece from a set $S$ and that piece gets assigned to them. Once again, we replace node $v$ with a BC-choose node with $|S|$ children, one for each piece $P \in S$, and make the subtree of each child a copy of $v$'s subtree. The agent will choose to progress to the child corresponding to the piece they would have chosen in the GCC protocol. We now need to ensure that piece $P$ actually gets assigned to agent $i$ in the BC protocol. We add ``agent $i$ gets piece $P$'' to each BC-leaf in the subtree corresponding to piece $P$. At the end of the conversion process, each BC-leaf will contain a valid allocation of the entire cake.

\item \textbf{If $v$ is a GCC-if-else node:} By definition, the conditions in the if-else statement at $v$ depend on  which pieces the agents cut/chose at each of the previous steps of the protocol. But since we have already converted all of $v$'s ancestors to BC nodes, the decisions the agents made at each step before the current node are already determined, so we know what the result of the if-else statement will be and which branch the protocol will progress to. Hence, we remove all the other branches, then delete the if-else node $v$ and connect the single remaining branch to $v$'s parent.
\end{itemize}

Note that after we have converted every node $v$ in the GCC-tree, the tree will only branch out at BC-choose nodes, and progression to a child will only depend on the choice the agent makes at that node. 

The protocol represented by the resulting BC-tree essentially simulates every action in the initial protocol using BC-tree nodes. In particular, every agent will get the same piece of cake they would have gotten in the initial protocol. Therefore, any simultaneous bounds on their envy levels towards other agents in the initial GCC protocol carry over to the BC protocol. Hence the two protocols are strongly envy-equivalent. 
\end{proof}

\begin{remark}\label{gcc-extensive-definition}
The definition of choose/cut nodes in GCC protocols is ambiguous as to whether the set $S$ of pieces of cake an agent is given is restricted to continuous pieces (i.e. pieces $[x, y]$ such that there is no other cut between $x$ and $y$), or we also allow for pieces that contain previous cuts.

Note that the above proof assumes the former, but the result also holds for the more extensive definition, since, by a similar argument to the one in the proof of theorem \ref{extended-bc-protocols}, if a piece $[x, y]$ contains cuts $x_1, ..., x_n$ we can split it into $n+1$ pieces $[x, x_1], ..., [x_n, y]$ and add an if-else node below to establish which of these $n+1$ pieces the agent acted on. This way, we can convert a protocol in the more extensive definition to the restricted one, and then convert it to a BC protocol via theorem \ref{gcc-to-bc}.
\end{remark}

\begin{theorem}\label{bc-to-gcc}
For any BC protocol, there is a strongly envy-equivalent GCC protocol.
\end{theorem}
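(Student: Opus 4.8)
The plan is to mirror the construction in the proof of Theorem~\ref{gcc-to-bc}, but in the opposite direction: given a BC-tree, I would build an equivalent GCC-tree by replacing each BC-node with a small GCC-gadget, processing the nodes in topological order so that by the time a node is handled all of its descendants have already been converted. Two of the three node types are essentially immediate. A BC-cut node ``agent $i$ cuts into piece $j$'' becomes a GCC-cut node in which agent $i$ is handed the singleton set $S=\{\text{piece }j\}$: since $|S|=1$ there is no piece-choice, only the cut position, exactly as in the BC model. A BC-leaf, which specifies a fixed allocation of the current pieces to the agents, becomes a chain of GCC-choose nodes, one per piece, each with a singleton set $S=\{P\}$ controlled by the intended owner of $P$; as the GCC definition notes, a singleton choose node assigns the piece with no genuine choice, so this realises the prescribed allocation.

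The only real difficulty is the BC-choose node, and this is where I expect the main obstacle to lie: GCC has no primitive that lets an agent ``pick a branch'', so a branch choice must be turned into an observable GCC action and then routed by a GCC-if-else node, all without disturbing any agent's final bundle (and hence their values and envy). I would first binarise each BC-choose node, replacing a $k$-way choice by agent $i$ with a cascade of binary choices all controlled by agent $i$. To simulate a single binary choice I would use a \emph{value-neutral signalling cut}: have agent $i$ cut some existing piece $P$ into $P'$ and $P''$, then insert a GCC-cut node offering agent $i$ the set $\{P',P''\}$, so that \emph{which} of the two sub-pieces agent $i$ elects to cut into records one bit of choice; a GCC-if-else node then branches on this recorded choice into the (already converted) subtrees for the two branches. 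Because the if-else conditions are permitted to depend on which piece an agent cut, this faithfully implements the binary branch.

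The correctness hinges on value preservation. Every cut I introduce for signalling only subdivides an existing piece, so at each leaf I would allocate all sub-pieces descended from an original piece to whichever agent that original piece is assigned to in the corresponding BC-leaf; by additivity this agent's value is unchanged and no other agent is affected, so the realised allocation has exactly the same values as in the BC protocol. The signalling cut positions are never used in any if-else condition and both sub-pieces go to the same owner, so they grant the cutting agent no extra power. Consequently there is a bijection between strategy profiles in the two protocols under which the agents reach corresponding leaves and receive identical bundles; any simultaneous envy bounds $envy(i,j)\le M_j$ that agent $i$ can guarantee in the BC protocol can therefore be guaranteed in the GCC protocol by making the corresponding signalling cuts, and conversely, giving strong envy-equivalence. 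The remaining point to nail down — and the fiddly part of writing the proof — is the bookkeeping that tracks which sub-pieces descend from which original piece so that the re-gluing at the leaves is well defined; note that the splitting cut always produces two pieces to signal with, so there is no piece-availability obstruction.
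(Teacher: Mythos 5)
Your construction is sound in outline but takes a genuinely different route from the paper's at the one step that matters, the simulation of BC-choose nodes. The paper first converts to ``cuts before choices'' form via Theorem~\ref{bc-cuts-before-choices}, then prepends a chain of cuts whose optimal play carves off a region $[0,a_n]$ of value $0$ to every agent, partitions that region among the agents, and has each agent signal branch choices by GCC-\emph{choosing} (i.e.\ being allocated) worthless sub-pieces of their personal region, with if-else nodes routing on those choices. You instead exploit the piece-selection freedom already present in a GCC-\emph{cut} node: agent $i$ splits some existing piece and then, offered the two halves, signals one bit by which half they elect to cut into, with the if-else branching on that. Your channel allocates nothing during signalling, so you avoid the paper's zero-value-region construction entirely, along with its reliance on agents playing the ``cut off a worthless prefix'' strategy; the price is that you scatter value-irrelevant cuts through real pieces of cake rather than quarantining them in a worthless region. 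Both constructions end up needing the more extensive interpretation of GCC pieces (Remark~\ref{gcc-extensive-definition}) to ignore these extra cuts, so that cost is shared.

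One step as written does not go through. You process nodes in topological order without first invoking Theorem~\ref{bc-cuts-before-choices}, so the piece $P$ you use for signalling at a choose node may be further subdivided by \emph{later, genuine} cut nodes and end up split among several agents at the leaf. Your re-gluing rule --- ``allocate all sub-pieces descended from $P$ to whichever agent $P$ is assigned to'' --- is then ill-defined, since $P$ has no single owner; worse, the position of a signalling cut relative to those later real cuts is not determined by the tree structure, so under the restrictive interpretation of GCC you cannot even write down the leaf's singleton choose sets. The fix is easy and you should state it: either put the protocol in ``cuts before choices'' form first (as the paper does), or drop the re-gluing language and simply allocate each piece of the final BC partition directly at the leaf under the extensive interpretation, letting each such piece silently absorb whatever signalling cuts fall inside it, then appeal to Remark~\ref{gcc-extensive-definition}. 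Separately, ``bijection between strategy profiles'' overclaims --- the GCC protocol has strictly more strategies because signalling cut positions are free --- but the outcome-preserving surjection you actually need is immediate from your construction.
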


\begin{proof}

We start by putting the BC protocol in the ``cuts before choices'' form given by theorem \ref{bc-cuts-before-choices}. 

The BC-cut nodes can then be trivially converted to GCC-cut nodes. The BC-leaves can be converted to a chain of GCC-choose nodes (so if agent $i$ is allocated some interval $[x, y]$ in the original node we make them ``choose'' the piece $[x, y]$ instead. If agent $i$ gets multiple intervals, we will have one GCC-choose node for each).

It remains to simulate the BC-choose nodes (in which an agent chooses which branch of the tree to proceed to) using GCC nodes. We will do this by creating a piece that has value $0$ for every agent, and then dividing that piece between all $n$ agents. Each agent will then further divide their piece and ``choose'' between various sub-pieces of it to artificially simulate branch choices. 

The actual protocol will be run on the remaining cake, which will be an interval $[a_n, 1], a_n \geq 0$ such that $V_i([a_n,1]) = 1$ for every $i$. We can do this by replacing the endpoint $0$ with $a_n$ in every node of the protocol. Clearly any envy bounds in the original protocol will hold in the new protocol, since the piece $[0, a_n]$ that we cut off has no value for any of the agents.

Explicitly, we add the following chain of cuts at the beginning of the protocol:

\begin{itemize}
\item agent $1$ cuts into $[0,1]$ at point $a_1$
\item agent $2$ cuts into $[0,a_1]$ at point $a_2$

...
\item agent $n$ cuts into $[0,a_{n-1}]$ at point $a_n$
\end {itemize}

The optimal strategy for agent $i$ is to cut at some $a_i$ such that $V_i([0, a_i]) = 0$, so the piece $[0, a_n]$ will have value $0$ for all agents, as required. This is because the agents cannot control how much of the piece $[0, a_n]$ they get, so they want it to have as little value for them as possible.

\begin{itemize}
\item agent $1$ cuts into $[0,a_n]$ at point $b_1$
\item agent $2$ cuts into $[0,b_1]$ at point $b_2$

...
\item agent $n-1$ cuts into $[0,b_{n-2}]$ at point $b_{n-1}$
\end{itemize}

Now, agent $i$ will use the piece $[b_i, b_{i-1}]$ (where $b_n = 0, b_0 = a_n$) to simulate their BC-choose nodes as follows:

If there are $m$ BC-choose nodes controlled by agent $i$ in the tree, they cut their piece $m-1$ times, dividing it into $m$ pieces. Each piece will correspond to a choose node.

If the BC-choose node corresponding to the $j$-th piece has $k$ branches, we replace it with the following process:
\begin{itemize}
\item Agent $i$ divides the $j$-th piece into $k$ sub-pieces, one for each branch.
\item In a GCC-choose node, agent $i$ chooses one of the $k$ sub-pieces.
\item In a GCC-if-else node, we proceed to one of the $k$ branches, depending on which piece the agent chose.
\item The agent gets the other $k-1$ pieces one by one (using a GCC choose node for each one).
\end{itemize} 

Note that not all of agent $i$'s $m$ BC-choose nodes might be reached in a specific run of the protocol, since they might be in different branches. This means that in some branches, the pieces corresponding to those choose nodes will not have been allocated to anyone. To resolve this issue, agent $i$ gets the remaining sub-pieces of $[b_i, b_{i-1}]$ at the end.

This way, agent $i$ always gets the entire interval $[b_i, b_{i-1}]$, so the interval $[0, a_n]$ is all allocated to various agents. 

Note there are some extra cuts in the cake at $a_1, ..., a_{n-1}$ that we have ignored in the protocol. This is fine in the more extensive interpretation of GCC protocols, and it can be converted to the more restrictive interpretation by the Remark above.

The resulting protocol will be strongly envy-equivalent to the initial protocol, since, as explained above, none of the steps in the conversion process affect the envy bounds.
\end{proof}

\begin{corollary}\label{extended-bc-to-gcc-polynomial}
If we allow for the more extensive interpretation of GCC protocols (as in remark \ref{gcc-extensive-definition}), any extended BC protocol can be converted to a strongly-envy equivalent GCC protocol in polynomial time.
\end{corollary}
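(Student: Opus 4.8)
The plan is to avoid the naive composition ``extended BC $\to$ BC $\to$ GCC'', which would chain Theorem~\ref{extended-bc-protocols} with Theorem~\ref{bc-to-gcc}: the first step alone incurs a $\Theta(n!)$ blowup and so cannot give a polynomial-time conversion. Instead I would re-run the construction in the proof of Theorem~\ref{bc-to-gcc} \emph{directly} on the extended BC protocol, staying in the extended world throughout and never passing through a plain BC protocol. The single observation that makes this work is that, under the extensive interpretation of GCC protocols (Remark~\ref{gcc-extensive-definition}), a GCC-cut node may hand an agent a piece $[x,y]$ that already contains interior cuts. An extended BC-cut node ``agent $i$ cuts between $x$ and $y$'' is therefore simulated by a single GCC-cut node with the singleton set $S=\{[x,y]\}$: a one-to-one, size-preserving translation with no branching, and hence no blowup. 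This is precisely the step at which the plain-BC route was forced to split a cut into a choose node with one child per intermediate piece.

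Concretely, I would proceed in three stages. First, apply Theorem~\ref{cuts-before-choices} (\emph{not} Theorem~\ref{bc-cuts-before-choices}) to put the extended BC protocol into ``cuts before choices'' form; this runs in polynomial time and preserves the number of nodes, since it only moves cut nodes above choose nodes and discards the extraneous cuts, which extended BC syntax permits. Second, translate each extended BC-cut node into an extensive GCC-cut node as above, and translate each extended BC-leaf, which allocates pieces $[x,y]$ possibly spanning several interior cuts, into a short chain of GCC-choose nodes, one per allocated piece. Third, simulate the BC-choose nodes exactly as in Theorem~\ref{bc-to-gcc}: prepend the chain of cuts producing a zero-value ``scrap'' interval $[0,a_n]$, give agent $i$ the sub-interval $[b_i,b_{i-1}]$, and replace each of agent $i$'s choose nodes (with, say, $k$ branches) by cutting that sub-interval into $k$ sub-pieces together with a GCC-choose node, a GCC-if-else node, and $k-1$ clean-up choose nodes. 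None of these operations changes the piece of cake any agent ultimately receives, so strong envy-equivalence is inherited at each stage, exactly as argued in the cited theorems.

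For the complexity claim I would check each ingredient separately. The ``cuts before choices'' step is polynomial and size-preserving by Theorem~\ref{cuts-before-choices}. The cut-node translation is one-to-one. Each leaf expands into at most one GCC-choose node per allocated piece, and the number of pieces at a leaf is at most one plus the number of cut nodes on the root-to-leaf path, hence polynomial. The scrap-piece machinery adds $O(n)$ setup cuts and, for each choose node of branching $k$, only $O(k)$ new nodes; summed over all choose nodes this is linear in the size of the (already processed) tree. Since every stage is polynomial and strong envy-equivalence is preserved throughout, the composite conversion runs in polynomial time.

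The main obstacle to watch is precisely the one this corollary is designed to dodge: any step that forces the ordering of non-consecutive cuts to be resolved tends to branch over all orderings, producing the $\Theta(n!)$ factor seen in Theorem~\ref{extended-bc-protocols}. The construction above avoids this only because the extensive GCC-cut node can carry an apparently uncut piece $[x,y]$ through unchanged, deferring all ordering questions to the leaves, where they are settled cheaply by enumerating the polynomially many pieces actually present on each path. Verifying that no other stage secretly reintroduces an ordering-dependent case split is the crux of the argument.
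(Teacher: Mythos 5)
Your proposal is correct and follows essentially the same route as the paper: apply Theorem~\ref{cuts-before-choices} to put the extended BC protocol into ``cuts before choices'' form in polynomial time, then run the conversion of Theorem~\ref{bc-to-gcc} directly, exploiting the extensive GCC interpretation to translate extended cut nodes one-to-one and to skip the final conversion back to the restrictive form. Your write-up merely spells out in more detail why each stage is polynomial and why the extensive GCC-cut node is the key to avoiding the $\Theta(n!)$ blowup, which the paper leaves implicit.
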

\begin{proof}
Consider the conversion algorithm given in the proof of theorem \ref{bc-to-gcc}. Apart from converting the BC protocol to an equivalent BC protocol in ``cuts before choices'' form at the start, and converting the resulting GCC protocol to fit the more restrictive interpretation of the definition at the end, the rest of the algorithm runs in polynomial time, since we only add a polynomial number of nodes to the tree.

Therefore, if we use the more extensive interpretation of GCC protocols instead, we can use theorem \ref{cuts-before-choices} to convert the original protocol to an extended BC protocol in ``cuts before choices'' form in polynomial time, then run the conversion algorithm as before.
\end{proof}

Note this argument does not work for the reverse conversion (GCC to BC) given by theorem \ref{gcc-to-bc}, because there we specifically make use of the more restrictive classes of protocols when converting, for example, GCC if-else nodes.

\subsection{Proportional Cake-Cutting Protocols}\label{sec:bcprop}
A cake-cutting protocol for $n$ agents is said to be proportional if each agent is guaranteed to receive a piece of cake with value at least $\frac{1}{n}$ according to their own valuation function. There are a number of well-known proportional cake-cutting protocols, such as the Dubins-Spanier protocol \citep{dubins-spanier} and the Even-Paz protocol \citep{even-paz}. 

In this section, we look at how these can be represented as GCC and BC protocols.

The discrete Dubins-Spanier protocol works as follows: each agent $i$ makes a cut at a point $a_i$ such that $V_i([0, a_i]) = \frac{1}{n}$. The agent who made the leftmost cut $a_j$ gets the piece $[0, a_j]$. That agent is then removed and we repeat the protocol on the remaining cake $[a_j, 1]$ with the other $n-1$ agents. The last agent gets the remaining piece of cake.

\citet{branzei-thesis} describes how the Dubins-Spanier protocol can be implemented as a GCC protocol. Each agent is asked to make a cut in $[0,1]$ at some point $a_i$. The leftmost cut $a_j$ is then determined using an If-Else node with $n$ branches, then, in the $j$-th branch, agent $j$ ``chooses'' a piece from the singleton set $\{[0, a_j]\}$, and we repeat the procedure for the other $n-1$ agents on the remaining cake, ignoring the cuts made at previous steps. This protocol clearly respects the proportionality condition, so, in our language, it is value-equivalent to the original protocol. Any bounds on envy from the original protocol would also clearly carry over, so they are strongly envy-equivalent. Note that the size of the resulting GCC tree is $\Theta(n!)$, because there are $n$ branches on the first if-else node, then $n-1$ branches for each of those, etc. The runtime of the protocol is $\Theta(n^2)$ if we just count the queries/nodes, as in the original protocol, and $\Theta(n^3)$ if we also consider the comparisons inside each if-else node, since to find the leftmost cut we need to compare the position of each cut to the other $n-1$. 

We can implement this as a BC protocol either by converting the GCC implementation using theorem \ref{gcc-to-bc}, or as an extended BC protocol as follows: agent $1$ makes a cut in $[0,1]$ at $a_1$, agent $2$ chooses between two branches (one for cutting in $[0,a_1]$ and the other one for $[a_1, 1]$), the next agent chooses between two branches corresponding to cutting left or right of the current leftmost node. After this, we know who made the leftmost cut $a_j$ in a specific branch. At the end of the protocol agent $j$ will get allocated the piece $[0, a_j]$. We repeat the algorithm for the other $n-1$ agents on the remaining cake, ignoring the cuts made at previous steps. There are $2^{n-1}$ branches in the first step, then for each of those there are $2^{n-2}$ branches in the second step, etc. So the size of the tree is $2^{n-1}2^{n-2}...2^1 = 2^{\frac{n(n-1)}{2}}$, so the size of the protocol is $\Theta(2^{n^2})$.

\bigskip
The Even-Paz protocol for $n$ agents, is as follows:
\begin{itemize}
\item Each agent makes a cut at $z_i \in [0,1]$ such that $V_i([0, z_i]) = \frac{1}{2}$.
\item The algorithm finds the $\lfloor \frac{n}{2} \rfloor$-th cut, which divides the cake into two ``halves''.
\item Each half is divided recursively amongst the agents whose cuts were inside that half (with the agent who made the $\lfloor \frac{n}{2} \rfloor$-th cut going into the first half).
\end{itemize}

The runtime is $\Theta(n \log(n))$.

We can implement the Even-Paz protocol as a GCC protocol as follows: each agent makes a cut at $z_i \in [0,1]$, then in an if-else node with $n$ branches, we identify the $\lfloor \frac{n}{2} \rfloor$-th cut. We repeat the procedure for each half for the $n/2$ agents who cut into that half. There are $n$ branches in the first step, then in each of those we have $n/2$ branches for dividing the first half, each of which then splits into $n/2$ branches for dividing the second half, etc.
Assuming $n=2^k$ for simplicity, the size of the tree is $n (\frac{n}{2})^2 (\frac{n}{4})^2 ... = 2^k 2^{2(k-1)} 2^ {2(k-2)} ... 2^2 = 2^{k + 2 \frac{k(k-1)}{2}} = 2^{k^2}$, so $\Theta(2^{k^2}) = \Theta(n^{\log(n)})$, and the resulting protocol is value-equivalent (so, in particular, proportional) and strongly envy-equivalent to the original one.

For the extended BC protocol implementation, agent $1$ makes a cut at $a_1$ in $[0,1]$, agent $2$ chooses between two branches (for cutting in $[0, a_1]$ or $[a_1, 1]$ respectively), agent $3$ chooses between three branches (corresponding to cutting into each of the three existing pieces), etc. After this, we know the ordering of the cuts in a specific branch, so, in particular, the $\lfloor \frac{n}{2} \rfloor$-th cut. We recurse on the two halves of the cake.
Assuming $n = 2^k$, we get $n!$ branches in the first step, $((\frac{n}{2})!)^2 = \Theta(n!)$ branches in the second step, etc, so the size of the tree is $n! ((\frac{n}{2})!)^2 ((\frac{n}{4})!)^2 ... $

\section{Conclusions and Future Work}
\subsection{Conclusion}
In this paper, we defined a new model for representing cake-cutting protocols as trees, called branch choice (BC) protocols, which differs from other models in that instead of choosing specific pieces of cake, the agents get to choose which branch of the tree to proceed to at a certain point, and then the cake is allocated at the end. We showed that various modifications to the model do not impact its expressive power. We also proved that any BC protocol can be converted to an equivalent protocol in which the cut nodes come before the choose nodes, so that, informally, the agents can first cut up the cake, and then choose between various branches that lead to different allocations.

The main benefit of BC protocols is that they form a rather bare-bones model, so that, for example, if the protocol reaches a given node, we can deduce the execution history of the protocol (the relative order of the cuts, as well as which branches the agents chose) just based on the path from the root to that node, with no ambiguity about what decisions the agents might have made at previous nodes. This makes it easier to reason about the protocols, particularly after putting them into the special ``cuts before choices'' form mentioned above, since in that case we also know a lot about the structure of the tree. 

This comes at no cost to the expressive power of the model, since the class of protocols it covers is the same as generalised cut and choose protocols as described by \citet{branzei}, up to strong envy equivalence (which we defined to mean that in both protocols, an agent can guarantee the same simultaneous bounds on their envy against any subset of the other agents).

However, as we have seen both in some of our conversion algorithms and in the case of representing some classic protocols as BC protocols, the simplicity of the model comes with a trade-off in space complexity, and in some cases the conversion seems to result in $\Theta(n!)$ or even worse blowups in the size of the tree representation compared to the original protocol.

\subsection{Future Work}

\begin{paragraph}{Checking equivalence} 
Most of our results involve constructing protocols that are equivalent to a given protocol. It would be interesting to look into the computational complexity of checking whether two given BC protocols are equivalent under one of the notions of equivalence we have defined, or the highly related question of computing the value/envy bounds an agent can guarantee for themselves in a given protocol. Note that the former question very easily reduces to the latter. 
\end{paragraph}

\begin{paragraph}{Simplifying protocols}
As we have seen above, explicitly representing a cake-cutting protocol as a tree, whether in the GCC or the BC protocols model, often results in a blowup in the size of the protocol, compared to its original statement. A natural question to ask is whether we can simplify a given protocol, i.e. find a strongly envy-equivalent protocol with a smaller tree representation. A brute-force approach to this would involve finding all protocols whose tree representation is smaller than the given one, and then checking, for each of these protocols, whether it is equivalent to the original protocol (which reduces to the question discussed previously).
\end{paragraph}

\begin{paragraph}{Adding extra structure to the model}
Another possible direction is to consider whether we can decrease the size of the tree representation by adding some extra structure to the model, as we already did informally in Algorithm \ref{selfridge-conway-gcc} for the Selfridge-Conway protocol, where on line \ref{label-piece-a}, for example, we label the piece agent $2$ cut $A$ and the other two pieces $B$ and $C$, instead of explicitly showing the three (virtually identical) branches corresponding to each piece. 

We propose two possible types of nodes to add:
\begin{itemize}
\item \emph{Piece permutation:} agent $i$ permutes the order of the existing pieces of cake. If there are currently $m$ cuts in the cake, then there are $m+1$ continuous pieces which we can number $1,..., m$ from left to right, and agent $i$ can apply some permutation $\sigma \in S_{m+1}$ to reorder them. Note this is functionally equivalent to the labelling used in the Selfridge-Conway protocol, which we mentioned above, since agent $2$ could, for example, reorder the pieces in increasing order of their value, and then cut the third piece. This would reduce the size of the BC tree representation of the Selfridge-Conway protocol in figure \ref{fig:selfridge-conway-bc}, since we would only need a single branch at the topmost choose node instead of three. This leads us to believe there is a class of protocols for which adding this type of node can be proven to reduce the size of the protocol, possibly exponentially in the case where this kind of node is used multiple times.

Adding piece permutation nodes to the BC protocols model is straightforward, as long as we refer to the pieces by their position in the cake (e.g. ``agent $1$ cuts the second piece from the left''), rather than by their endpoints, which in the case of BC protocols does not make any difference.

We can also add them to GCC protocols, provided we change the model to refer to pieces by their position in the cake, but in this case we have to decide whether we should account for pieces that agents have already taken from the cake, e.g. if the cake is divided into three pieces $1,2,3$ and some agent chooses piece $2$, should the remaining pieces now be $1$ and $3$ or can we relabel them $1$ and $2$?

\item \emph{Agent permutation:} There are two types of action we can consider here: either some agent $i$ permutes the order of the other agents, which would allow them to choose, for example, which of the other agents they are envious of in some situation, or we have a referee who can reorder the agents according to some rule. The former option can be added to both GCC and BC protocols, while the latter can more naturally be added to GCC protocols, as an extension to the existing if-else node, and could be used, for example, in the case of the Dubins-Spanier protocol, where we use an if-else node to identify the agent who made the leftmost cut, in which case we can reorder the agents so that agent is now the $n$-th one, and recurse on the first $n-1$ agents.
\end{itemize}
\end{paragraph}

\begin{paragraph}{Converting other protocols to the BC model}
We have already shown that some classic protocols (Selfridge-Conway, Dubins-Spanier, Even-Paz) can be converted to BC protocols, but it would be interesting to extend this result to other protocols in the literature. Of particular interest would be the discrete bounded envy-free protocol for $n$ agents proposed by \citet{aziz}. We suspect that it is possible to convert this protocol to a BC protocol, since a lot of the additional structure used in the protocol could be encoded into the BC tree representation, or implicitly in the agents' strategies to achieve envy-freeness, but we have not checked this.
\end{paragraph}

\section*{Acknowledgements}
We thank Simina Br\^{a}nzei for helpful comments.

\bibliographystyle{authordate1}
\bibliography{citations}

\end{document}